\tikzset{initial text={}}
\tikzset{every initial by arrow/.style={-stealth}}
\newcommand{\quot}[1]{``#1''}
\newcommand{\coloneq}{\mathop{:=}\,}
\renewcommand{\epsilon}{\varepsilon}
\newcommand{\nats}{\mathbb{N}}
\newcommand{\card}[1]{|#1|}
\newcommand{\set}[1]{\{#1\}}
\newcommand{\restr}[2]{{\left.\kern-\nulldelimiterspace #1 \vphantom{\big|} \right|_{#2} }}
\newcommand{\range}{\mathit{range}}
\newcommand{\domain}{\mathit{domain}}
\newcommand{\bin}[1]{\langle#1\rangle}
\newcommand{\aut}{\mathfrak{A}}
\newcommand{\vpdalphabet}{\widetilde{\Sigma}}
\newcommand{\calls}{\Sigma_c}
\newcommand{\returns}{\Sigma_r}
\newcommand{\locals}{\Sigma_l}
\newcommand{\vpa}{\text{VPA}\xspace}
\newcommand{\vpas}{\text{VPAs}\xspace}
\newcommand{\nvpa}{\text{VPA}\xspace}
\newcommand{\tnvpa}{\text{TVPA}\xspace}
\newcommand{\tnvpas}{\text{TVPAs}\xspace}
\newcommand{\oneaja}{\text{1-AJA}\xspace}
\newcommand{\oneajas}{\text{1-AJAs}\xspace}
\newcommand{\bnvpa}{\text{BVPA}\xspace}
\newcommand{\bnvpas}{\text{BVPAs}\xspace}
\newcommand{\rej}{\mathit{rej}}
\newcommand{\push}{\downarrow\!}
\newcommand{\pop}{\uparrow\!}
\newcommand{\local}{\rightarrow}
\newcommand{\graph}{G}
\newcommand{\game}{\mathcal{G}}
\newcommand{\col}{\Omega}
\newcommand{\pdgame}{\mathcal{H}}
\newcommand{\vpsys}{{\mathcal S}}
\newcommand{\halfthinspace}{{\kern .08333em}}
\newcommand{\ttrue}{\texttt{tt}}
\newcommand{\ffalse}{\texttt{ff}}
\newcommand{\ddiamond}[1]{\langle #1 \rangle}
\newcommand{\bbox}[1]{[\halfthinspace #1 \halfthinspace]}
\newcommand{\Raut}{\mathcal{R}}
\newcommand{\ltl}{\text{LTL}\xspace}
\newcommand{\ldl}{\text{LDL}\xspace}
\newcommand{\vldl}{\text{VLDL}\xspace}
\newcommand{\vltl}{\text{VLTL}\xspace}
\newcommand{\dpldl}{\text{DPLDL}\xspace}
\newcommand{\caret}{\text{CaRet}\xspace}
\newcommand{\omegavpl}{\ensuremath{\omega}\text{-VPL}\xspace}
\newcommand{\pspace}{\textsc{{PSpace}}\xspace}
\newcommand{\exptime}{\textsc{{ExpTime}}\xspace}
\newcommand{\twoexp}{\textsc{{2ExpTime}}\xspace}
\newcommand{\threeexp}{\textsc{{3ExpTime}}\xspace}
\newcommand{\steps}{\mathit{steps}}
\newcommand{\sh}{\mathit{sh}}
\newcommand{\stepsindex}{\mathit{st}}
\newcommand{\ini}{0}
\newcommand{\fin}{1}
\newcommand{\comms}{\mathit{Comms}}
\newcommand{\ltlnext}{{\mathbf X}}
\newcommand{\ltluntil}{{\mathbf U}}
\newcommand{\ltleventually}{{\mathbf F}}
\newcommand{\tm}{{\mathcal T}}
\newcommand{\bplus}{\mathcal{B}^{+}}
\newcommand{\vps}{\text{VPS}\xspace}
\newcommand{\vpss}{\text{VPS's}\xspace}
\newcommand{\traces}{\mathit{traces}}
\begin{document}

\title{Visibly Linear Dynamic Logic
\thanks{Supported by the projects ``TriCS'' (ZI 1516/1-1) and ``AVACS'' (SFB/TR 14) of the German Research Foundation (DFG).}
}
\author{Alexander Weinert and Martin Zimmermann}

\institute{Reactive Systems Group, Saarland University, 66123 Saarbrücken, Germany\\
	\email{\{weinert, zimmermann\}@react.uni-saarland.de}}

\maketitle

\begin{abstract}
We introduce Visibly Linear Dynamic Logic (\vldl), which extends Linear Temporal Logic~(LTL) by temporal operators that are guarded by visibly pushdown languages over finite words.
In \vldl one can, e.g., express that a function resets a variable to its original value after its execution, even in the presence of an unbounded number of intermediate recursive calls.
We prove that \vldl describes exactly the $\omega$-visibly pushdown languages.
Thus it is strictly more expressive than \ltl and able to express recursive properties of programs with unbounded call stacks.

The main technical contribution of this work is a translation of \vldl into $\omega$-visibly pushdown automata of exponential size via one-way alternating jumping automata.
This translation yields exponential-time algorithms for satisfiability, validity, and model checking.
We also show that visibly pushdown games with \vldl winning conditions are solvable in triply-exponential time.
We prove all these problems to be complete for their respective complexity classes.
%Furthermore, using deterministic pushdown automata as guards yields undecidable decision problems.
\end{abstract}

%\begin{keyword}
%Temporal Logic, Visibly Pushdown Languages, Satisfiability, Model Checking, Infinite Games	
%\end{keyword}

\section{Introduction}
\label{sec:introduction}
Linear Temporal Logic (\ltl) \cite{pnueli77} is widely used for the specification of non-terminating systems.
Its popularity is owed to its simple syntax and intuitive semantics, as well as to the so-called exponential compilation property, i.e., for each \ltl formula there exists an equivalent Büchi automaton of exponential size.
Due to the latter property, there exist algorithms for model checking in polynomial space and for solving infinite games in doubly-exponential time.

While \ltl suffices to express properties of circuits and non-recursive programs with bounded memory, its application to real-life programs is hindered by its inability to express recursive properties.
In fact, \ltl is too weak to even express all $\omega$-regular properties.
There are several approaches to address the latter shortcoming by augmenting \ltl, e.g., with regular expressions \cite{leucker07,vardi11}, finite automata on infinite words~\cite{vardi94}, and right-linear grammars~\cite{wolper83}.
We concentrate on the approach of Linear Dynamic Logic (\ldl) \cite{vardi11}, which guards the globally- and eventually-operators of \ltl with regular expressions.
While the \ltl-formula $\ltleventually\psi$ simply means \quot{either now, or at some point in the future, $\psi$ holds}, the corresponding \ldl operator $\ddiamond{r}\psi$ means \quot{There exists an infix matching the regular expression~$r$ starting at the current position, and $\psi$ holds true after that infix}.

The logic \ldl captures the $\omega$-regular languages.
In spite of its greater expressive power, \ldl still enjoys the exponential compilation property, hence there exist algorithms for model checking and solving infinite games in polynomial space and doubly-exponential time, respectively.

While the expressive power of \ldl is sufficient for many specifications, it is still not able to reason about recursive properties of systems.
In order to address this shortcoming, we replace the regular expressions guarding the temporal operators with visibly pushdown languages (VPLs)~\cite{alur04} specified by visibly pushdown automata (\vpas)~\cite{alur04}.

A \vpa is a pushdown automaton that operates over a fixed partition of the input alphabet into calls, returns and local actions.
In contrast to classical pushdown automata, \vpas may only push symbols onto the stack when reading calls and may only pop symbols off the stack when reading returns.
Moreover, they may not even inspect the topmost symbol of the stack when not reading returns.
Thus, the height of the stack after reading a word is known a priori for all \vpas using the same partition of the input alphabet.
Due to this, \vpas are closed under union and intersection, as well as complementation.
The class of languages accepted by \vpas is known as visibly pushdown languages.

The class of such languages over infinite words, i.e., $\omega$-visibly pushdown languages, are known to allow for the specification of many important properties in program verification such as \quot{there are infinitely many positions at which at most two functions are active}, which expresses repeated returns to a main-loop, or \quot{every time the program enters a module $m$ while $p$ holds true, $p$ holds true upon exiting $m$}~\cite{alur04}.
The extension of \vpas to their variant operating on infinite words is, however, not well-suited to the specification of such properties in practice, as Boolean operations on such automata do not preserve the logical structure of the original automata.
By guarding the temporal operators introduced in~\ldl with \vpas, \vldl allows for the modular specification of recursive properties while capturing $\omega$-\vpas.

\subsection{Our contributions}

We begin with an introduction of \vldl and give examples of its use. 

Firstly, we provide translations from \vldl to \vpas over infinite words, so-called $\omega$-\vpas, and vice versa.
For the direction from logic to automata we translate \vldl formulas into one-way alternating jumping automata (1-AJA), which are known to be translatable into $\omega$-\vpas of exponential size due to Bozzelli \cite{bozzelli07}.
For the direction from automata to logic we use a translation of $\omega$-\vpas into deterministic parity stair automata by Löding et al.~\cite{loeding04}, which we then translate into \vldl formulas.

Secondly, we prove the satisfiability problem and the validity problem for \vldl to be \exptime-complete.
Membership in \exptime follows from the previously mentioned constructions, while we show \exptime-hardness of both problems by a reduction from the word problem for polynomially space-bounded alternating Turing machines adapting a similar reduction by Bouajjani et al.~\cite{bouajjani97}.

As a third result, we show that model checking visibly pushdown systems against \vldl specifications is \exptime-complete as well.
Membership in \exptime follows from \exptime-membership of the model checking problem for \oneajas against visibly pushdown systems.
\exptime-hardness follows from \exptime-hardness of the validity problem for \vldl.

Moreover, solving visibly pushdown games with \vldl winning conditions is proven to be \threeexp-complete.
Membership in \threeexp follows from the exponential translation of \vldl formulas into $\omega$-\vpas and the membership of solving pushdown games against $\omega$-\vpa winning conditions in \twoexp due to Löding et al. \cite{loeding04}.
\threeexp-hardness is due to a reduction from solving pushdown games against \ltl specifications, again due to Löding~et~al.~\cite{loeding04}.

Finally, we show that replacing the visibly pushdown automata used as guards in \vldl by deterministic pushdown automata yields a logic with an undecidable satisfiability problem.

Our results show that \vldl allows for the concise specification of important properties in a logic with intuitive semantics.
In the case of satisfiability and model checking, the complexity jumps from \pspace-completeness for \ldl to \exptime-completeness.
For solving infinite games, the complexity gains an exponent moving from \twoexp-completeness to \threeexp-completeness.

We choose \vpas for the specification of guards in order to simplify arguing about the expressive power of \vldl.
In order to simplify the modeling of $\omega$-VPLs, other formalisms that capture VPLs over finite words may be used.
We discuss one such formalism in the conclusion.

\subsection{Related Work}

The need for specification languages able to express recursive properties has been identified before and there exist other approaches to using visibly pushdown languages over infinite words for specifications, most notably \vltl~\cite{bozzelli14b} and \caret \cite{alur04b}.
While \vltl captures the class of $\omega$-visibly pushdown languages, \caret captures only a strict subset of it.
For both logics there exist exponential translations into $\omega$-\vpas.
In this work, we provide exponential translations from \vldl to $\omega$-\vpas and vice versa.
Hence, \caret is strictly less powerful than \vldl, but every \caret formula can be translated into an equivalent \vldl formula, albeit with a doubly-exponential blowup.
Similarly, every \vltl formula can be translated into an equivalent \vldl formula and vice versa, with doubly-exponential blowup in both directions.

In contrast to \vltl, which augments regular expressions with substitution operators (replacing occurrences of local actions by well-matched words), \vldl instead extends the well-known concepts introduced for \ltl and \ldl with visibly pushdown automata.
Hence, specifications written in \vldl are  modular and have an intuitive semantics, in particular for practitioners already used to~\ltl.

Other logical characterizations of visibly pushdown languages include characterizations by a fixed-point logic \cite{bozzelli07} and by monadic second order logic augmented with a binary matching predicate (MSO$_\mu$) \cite{alur04}.
Even though these logics also capture the class of visibly pushdown languages, they feature neither an intuitive syntax nor intuitive semantics and thus are less applicable than \vldl in a practical setting.

\section{Preliminaries}
\label{sec:preliminaries}

In this section we introduce the basic notions used in the remainder of this work. %, namely (nondeterministic) visibly pushdown automata and related concepts.
A pushdown alphabet $\vpdalphabet = (\calls, \returns, \locals)$ is a finite set $\Sigma$ that is partitioned into calls $\calls$, returns $\returns$, and local actions $\locals$.
We write $w = w_0\cdots w_n$ and $\alpha = \alpha_0\alpha_1\alpha_2\cdots$ for finite and infinite words, respectively.
The stack height $\sh(w)$ reached after reading $w$ is defined inductively as $\sh(\epsilon) = 0$, $\sh(wc) = \sh(w) + 1$ for $c \in \calls$, $\sh(wr) = \max \set{0, \sh(w) - 1}$ for $r \in \returns$, and $\sh(wl) = \sh(w)$ for $l \in \locals$.
We say that a call $c \in \calls$ at some position $k$ of a word $w$ is matched if there exists a $k' > k$ with $w_{k'} \in \returns$ and $\sh(w_0\cdots w_k) - 1 = \sh(w_0\cdots w_{k'})$.
The return at the earliest such position $k'$ is called the matching return of $c$.
We define $\steps(\alpha) \coloneq \set{ k \in \nats \mid \forall k' \geq k.\  \sh(\alpha_0\cdots\alpha_{k'}) \geq \sh(\alpha_0\cdots\alpha_k)}$ as the positions reaching a lower bound on the stack height along the remaining suffix.
Note that we have $0 \in \steps(\alpha)$ and that $\steps(\alpha)$ is infinite for infinite words~$\alpha$.
%For finite words $w$, $\steps(w)$ is finite and we have $\card{w}-1 \in \steps(w)$.

\paragraph{Visibly Pushdown Systems}
%We consider several models of automata, many of which have visibly pushdown systems as their underlying technical core.
A visibly pushdown system (\vps) $\vpsys = (Q, \vpdalphabet, \Gamma, \Delta)$ consists of a finite set~$Q$ of states, a pushdown alphabet $\vpdalphabet$, a stack alphabet $\Gamma$, which contains a stack-bottom marker $\bot$, and a transition relation \[ \Delta \subseteq (Q \times \calls \times Q \times (\Gamma \setminus \set{\bot})) \cup (Q \times \returns \times \Gamma \times Q) \cup (Q \times \locals \times Q)\enspace.\]
Note that we write the stack content from top to bottom, i.e., the tip of the stack is on the left.
A configuration $(q, \gamma)$ of $\vpsys$ is a pair of a state $q \in Q$ and a stack content $\gamma \in \Gamma_c = (\Gamma \setminus \set{\bot})^* \cdot \bot$.
The VPS $\vpsys$ induces the configuration graph $\graph_\vpsys = (Q \times \Gamma_c, E)$ with $E \subseteq ((Q \times \Gamma_c) \times \Sigma \times (Q \times \Gamma_c))$ and $((q, \gamma), a, (q', \gamma')) \in E$  if, and only if, either
\begin{enumerate}
\item $a \in \calls$, $(q, a, q', A) \in \Delta$, and $A\gamma = \gamma'$,
\item $a \in \returns$, $(q, a, \bot, q') \in \Delta$, and $\gamma = \gamma' = \bot$,
\item $a \in \returns$, $(q, a, A, q') \in \Delta$, $A \neq \bot$, and $\gamma = A\gamma'$, or
\item $a \in \locals$, $(q, a, q') \in \Delta$, and $\gamma = \gamma'$.
\end{enumerate}
For an edge $e = ((q, \gamma), a, (q', \gamma'))$, we call~$a$ the label of~$e$.
We define a run $\pi = (q_0, \gamma_0)\cdots(q_n, \gamma_n)$ of $\vpsys$ on $w = w_0\cdots w_{n-1}$ as a sequence of configurations where $((q_i, \gamma_i), w_i, (q_{i+1}, \gamma_{i+1})) \in E$ in $\graph_\vpsys$ for all $i \in [0;n-1]$.
%there is an $\alpha_i$ labeled edge from $(q_i, \gamma_i)$ to $(q_{i+1}, \gamma_{i+1})$ in $\graph_\vpsys$.
The VPS $\vpsys$ is deterministic if for each vertex $(q, \gamma)$ in $\graph_\vpsys$ and each $a \in \Sigma$ there exists at most one outgoing $a$-labeled edge from $(q, \gamma)$.
In figures, we write $\push A$, $\pop A$ and~$\local$ to denote pushing~$A$ onto the stack, popping~$A$ off the stack, and local actions, respectively.

\paragraph{(Büchi) Visibly Pushdown Automata}
A visibly pushdown automaton (\vpa) \cite{alur04} is a six-tuple $\aut = (Q, \vpdalphabet, \Gamma, \Delta, I, F)$, where $\vpsys = (Q, \vpdalphabet, \Gamma, \Delta)$ is a VPS and $I, F \subseteq Q$ are sets of initial and final states.
A run $(q_0, \gamma_0)(q_1, \gamma_1)(q_2, \gamma_2)\cdots$ of $\aut$ is a run of~$\vpsys$, which we call initial if $(q_0, \gamma_0) = (q_I, \bot)$ for some $q_I \in I$.
A finite run $\pi = (q_0, \gamma_0)\cdots(q_n, \gamma_n)$  is accepting if $q_n \in F$.
A Büchi \vpa (\bnvpa) is syntactically identical to a \vpa, but we only consider runs over infinite words.
An infinite run is Büchi-accepting if it visits states in $F$ infinitely often.
A (B)VPA $\aut$ accepts a word~$w$ (an infinite word $\alpha$) if there exists an initial (Büchi-)accepting run of $\aut$ on $w$ ($\alpha$).
We denote the family of languages accepted by (B)VPA by ($\omega$-)VPL.
%If it is clear from the context, we omit the prefix \quot{Büchi} and call both models \vpa.

\section{Visibly Linear Dynamic Logic}
\label{sec:vldl}

We fix a finite set $P$ of atomic propositions and a partition $\vpdalphabet = (\calls, \returns, \locals)$ of $2^P$ throughout this work.
The syntax of \vldl is defined by the grammar
\[
\varphi \coloneq p \mid \neg \varphi \mid \varphi \wedge \varphi \mid \varphi \vee \varphi
  \mid \ddiamond{\aut} \varphi 
  \mid \bbox{\aut} \varphi \enspace,
 \]
where $p \in P$ and where~$\aut$ ranges over testing visibly pushdown automata (\tnvpa) over~$\vpdalphabet$.
We define a \tnvpa $\aut = (Q, \vpdalphabet, \Gamma, \Delta, I, F, t)$ as consisting of a \nvpa $(Q, \vpdalphabet, \Gamma, \Delta, I, F)$ and a partial function~$t$ mapping states to \vldl formulas over $\vpdalphabet$.\footnote{Obviously, there are some restrictions on the nesting of tests into automata. More formally, we require the subformula relation to be acyclic as usual.}
Intuitively, such an automaton accepts an infix $\alpha_i\cdots\alpha_j$ of an infinite word $\alpha_0\alpha_1\alpha_2\cdots$ if the embedded \nvpa has an initial accepting run $(q_i, \gamma_i)\cdots(q_{j+1}, \gamma_{j+1})$ on $\alpha_i\cdots\alpha_j$ such that, if $q_{i+k}$ is marked with $\varphi$ by $t$, then $\alpha_{i+k}\alpha_{i+k+1}\alpha_{i+k+2}\cdots$ satisfies $\varphi$.

We define the size of $\varphi$ as the sum of the number of subformulas (including those contained as tests in automata and their subformulas) and  of the numbers of states of the automata contained in $\varphi$.
As shorthands, we use $\ttrue \coloneq p \lor \neg p$ and $\ffalse \coloneq p \land \neg p$ for some atomic proposition $p$.
Even though the testing function $t$ is defined as a partial function, we generally assume it is total by setting $t\colon q \mapsto \ttrue$ if $q \notin \domain(t)$.

Let $\alpha = \alpha_0\alpha_1\alpha_2\cdots$ be an infinite word over $2^P$ and let $k \in \nats$ be a position in $\alpha$.
We define the semantics of \vldl inductively via
\begin{itemize}

\item $(\alpha, k) \models p$ if, and only if, $p \in \alpha_k$,

\item $(\alpha, k) \models \neg \varphi$ if, and only if, $(\alpha, k) \not\models \varphi$,

\item $(\alpha, k) \models \varphi_0 \land \varphi_1$ if, and only if, $(\alpha, k) \models \varphi_0$ and $(\alpha, k) \models \varphi_1$, and dually for $\varphi_0 \lor \varphi_1$,

\item $(\alpha, k) \models \ddiamond{\aut}\varphi$ if, and only if, there exists $l \geq k$ s.t.\ $(k, l) \in \Raut_\aut(\alpha)$ and $(\alpha, l) \models \varphi$,

\item $(\alpha, k) \models \bbox{\aut}\varphi$ if, and only if, for all $l \geq k$, $(k, l) \in \Raut_\aut(\alpha)$ implies $(\alpha, l) \models \varphi$,

\end{itemize}
where $\Raut_\aut(\alpha)$ contains all pairs of positions $(k, l)$ such that $\aut$ accepts $\alpha_k\cdots\alpha_{l-1}$.
Formally, we define
\begin{multline*}
	\Raut_\aut(\alpha) \coloneq \{ (k, l) \in \nats \times \nats \mid \exists \text{ initial accepting run } (q_k, \sigma_k)\cdots (q_l, \sigma_l) \text{ of } \aut \\ \text{ on } \alpha_k\cdots\alpha_{l-1}
	\textbf{and } \forall m \in \set{k,\dots,l}.\ (\alpha, m) \models t(q_m) \}.
\end{multline*}
We write $\alpha \models \varphi$ as a shorthand for $(\alpha, 0) \models \varphi$ and say that $\alpha$ is a model of~$\varphi$ in this case.
The language of $\varphi$ is defined as $L(\varphi) \coloneq \set{ \alpha \in (2^P)^\omega \mid \alpha \models \varphi}$.
As usual, disjunction and conjunction are dual, as well as the $\ddiamond{\aut}$-operator and the $\bbox{\aut}$-operator, which can be dualized using De Morgan's law and the logical identity $\bbox{\aut}\varphi \equiv \neg \ddiamond{\aut}\neg \varphi$, respectively.
Note that the latter identity only dualizes the temporal operator, but does not require complementation of the automaton guarding the operator.
We additionally allow the use of derived boolean operators such as~$\rightarrow$ and~$\leftrightarrow$, as they can easily be reduced to the basic operators~$\land$,~$\lor$ and~$\neg$.

The logic \vldl combines the expressive power of visibly pushdown automata with the intuitive temporal operators of \ldl.
Thus, it allows for concise and intuitive specifications of many important properties in program verification~\cite{alur04}.
In particular, \vldl allows for the specification of recursive properties, which makes it more expressive than both \ldl \cite{vardi11} and \ltl \cite{pnueli77}.
In fact, we can embed \ldl in \vldl in linear time.
%Strictness of this inclusion follows from the strictness of the inclusion of $\omega$-regular languages in the family of $\omega$-visibly pushdown languages and Theorem~\ref{thm:vldl-eq-vpa}.

%\begin{lemma}
%\label{lem:ltl-to-vldl}
%For any \ltl formula $\psi$ we can effectively construct in linear time and size a \vldl formula $\varphi$ such that $L(\psi) = L(\varphi)$ holds true.	
%\end{lemma}

%\todo[inline]{Note that embeddability of \ldl is not trivial anymore, as \ldl uses regular expressions with tests}

\begin{lemma}
\label{lem:ldl-to-vldl}
For any \ldl formula $\psi$ over $P$ we can effectively construct a \vldl formula~$\varphi$ over $\vpdalphabet \coloneq (\emptyset, \emptyset, 2^P)$ in linear time such that $L(\psi) = L(\varphi)$.	
\end{lemma}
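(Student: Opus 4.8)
The plan is to translate LDL into VLDL by exploiting the fact that over the trivial pushdown alphabet $\vpdalphabet = (\emptyset, \emptyset, 2^P)$, every letter is a local action, so the stack never grows and a \tnvpa degenerates into an ordinary finite automaton with tests (augmented with $\epsilon$-free local transitions). Since LDL guards its temporal operators with regular expressions, and regular expressions over $2^P$ denote exactly the languages recognizable by NFAs over $2^P$, the two logics share the same semantic schema for the $\ddiamond{\cdot}$ and $\bbox{\cdot}$ operators; only the guard formalism differs. Thus the construction is essentially a change of guard representation, carried out by structural induction on $\psi$.

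First I would recall the precise syntax of LDL, in which the diamond operator $\ddiamond{r}\psi$ is guarded by a regular expression $r$ built from tests $\phi?$, atomic letters, union, concatenation, and Kleene star, and likewise for the box operator. The induction is immediate for the Boolean and atomic cases, which map identically. For the temporal cases $\ddiamond{r}\psi$ and $\bbox{r}\psi$, I would first recursively translate the subformula $\psi$ and every test formula $\phi$ occurring inside $r$ into equivalent VLDL formulas, and then convert the regular expression $r$ into a \tnvpa $\aut_r$ over $\vpdalphabet$. Because $\calls = \returns = \emptyset$, every transition of $\aut_r$ reads a local action, so $\aut_r$ never touches the stack and is effectively an NFA; I would build it from $r$ by the standard Thompson/Glushkov construction, attaching the translated test formulas to the states corresponding to the test subexpressions via the testing function $t$. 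The point to verify is that the acceptance condition $\Raut_{\aut_r}(\alpha)$ coincides, position-for-position, with the LDL semantics of $r$ as a relation on positions: a pair $(k,l)$ lies in the relation exactly when there is a path reading $\alpha_k \cdots \alpha_{l-1}$ along which every test is satisfied at the corresponding position of $\alpha$.

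The main obstacle, and the only genuinely technical point, is reconciling the two ways tests are handled: in LDL a test $\phi?$ consumes no input and is checked at the current position, whereas in a \tnvpa a test $t(q) = \phi$ is attached to a state and is checked at the suffix beginning where the automaton currently sits. I would resolve this by ensuring that the Thompson-style construction places each test-guarded transition so that entering (or leaving) the associated state occurs precisely at the position where LDL evaluates the test, matching the universally quantified test condition $\forall m.\ (\alpha,m)\models t(q_m)$ in the definition of $\Raut_{\aut_r}(\alpha)$. A careful bookkeeping of the off-by-one indexing between \myquot{a path of length $l-k$ reads the infix $\alpha_k \cdots \alpha_{l-1}$} and \myquot{the run visits states $q_k, \dots, q_l$} is needed here, but it is routine once set up correctly.

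Finally I would confirm the linear-time and linear-size bounds: the Thompson construction produces an automaton with a number of states linear in the length of $r$, the test formulas are reused verbatim from the recursive calls without duplication, and the subformula relation remains acyclic because tests in $r$ refer only to strictly smaller LDL subformulas. Summing over the structure of $\psi$, the total size of $\varphi$ — counting both subformulas and automaton states as in the paper's size measure — is linear in the size of $\psi$, and the whole transformation is carried out in a single bottom-up pass, giving the claimed linear-time effective construction with $L(\psi) = L(\varphi)$.
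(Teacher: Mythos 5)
Your proposal is correct and follows essentially the same route as the paper: structural induction on $\psi$, with the only interesting case being the guarded diamond, where the regular expression with tests is compiled into a finite automaton with tests and then reinterpreted as a \tnvpa over $(\emptyset,\emptyset,2^P)$ since every letter is local and the stack is never used. The paper simply delegates the automaton construction (including the placement of tests on states) to the cited construction of Faymonville and Zimmermann, whereas you sketch it via a Thompson/Glushkov-style build; the substance is the same.
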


\begin{proof}
	We define $\varphi$ by structural induction over $\psi$.
	The only interesting case is $\psi = \ddiamond{r}\psi'$, since all other cases follow from closure properties and duality.
	We obtain the \vldl formula $\varphi'$ over $\vpdalphabet$ equivalent to $\psi'$ by induction and construct the finite automaton $\aut_r$ from $r$ using the construction of Faymonville and Zimmermann~\cite{FaymonvilleZimmermann17}.
	The automaton $\aut_r$ contains tests, but is not equipped with a stack.
	Since $\vpdalphabet = (\emptyset, \emptyset, 2^P)$, we can interpret $\aut_r$ as a \tnvpa without changing the language it recognizes.
	We call the \tnvpa $\aut'_r$ and define $\varphi = \ddiamond{\aut'_r}\varphi'$.
\end{proof}

Since \ltl can be in turn embedded in \ldl in linear time, Lemma \ref{lem:ldl-to-vldl} directly implies the embeddability of \ltl in \vldl in linear time.
Note that this proof motivates the use of \tnvpas instead of \vpas without tests as guards in order to obtain a concise formalism.
We later show that removing tests from these automata does not change the expressiveness of \vldl.
It is, however, open whether it is possible to translate even \ltl formulas into \vldl formulas without tests in polynomial time.

%\begin{proposition}
%\label{prop:ltl-to-vldl}
%	For any \ltl formula $\psi$ over $P$ we can effectively construct a \vldl formula $\varphi$ over $\vpdalphabet \coloneq (\emptyset, \emptyset, 2^P)$ in linear time such that $L(\psi) = L(\varphi)$.
%\end{proposition}
%
%The combination of \vpas with guarded temporal operators allows for the intuitive specification of important recursive properties in program verification.

\section{Examples of \vldl Specifications}
As we will show in Section \ref{sec:expressiveness}, \vldl captures the visibly pushdown languages and thus, it is strictly more expressive than traditional Büchi automata.
In fact, \vldl allows for concise formulations of a number of important properties of recursive programs in program verification.
We give some examples of such properties and their formalization in this section.

\begin{example}
\label{ex:vldl}
	Assume that we have a program that may call some module $m$ and that has the observable atomic propositions $P \coloneq \set{c, r, p, q}$, where $c$ and $r$ denote calls to and returns from $m$, and $p$ and $q$ are arbitrary propositions.

	We now construct a \vldl formula that describes the condition \quot{If $p$ holds true immediately after entering $m$, it shall hold immediately after the corresponding return from $m$ as well}~\cite{alur04b}.
	For the sake of readability, we assume that the program never emits both $c$ and $r$ in the same step.
	Moreover, we assume that the program emits at least one atomic proposition in each step.
	Since we want to count the calls and returns occurring in the program using the stack, we pick the pushdown alphabet $\vpdalphabet = (\calls, \returns, \locals)$ such that $P' \subseteq P$ is in~$\calls$ if $c \in P'$, $P' \in \returns$ if $r \in P'$, but $c \notin P'$, and $P' \in \locals$ otherwise.
	
	The formula $\varphi \coloneq \bbox{\aut_c} (p \rightarrow \ddiamond{\aut_r}p)$ then captures the condition above, where Figure~\ref{fig:vldl-example} shows $\aut_c$ and $\aut_r$.
	The automaton $\aut_c$ accepts all finite words ending with a call to $m$, whereas the automaton $\aut_r$ accepts all words ending with a single unmatched return.
%	We have $\cl(\varphi) = \set{\varphi, \neg p \lor \ddiamond{\aut_r}p, \neg p, p}$ and $\card{\varphi} = \card{\cl(\varphi)} + \card{\aut_c} + \card{\aut_r} = 4 + 2 + 2 = 8$.
	
	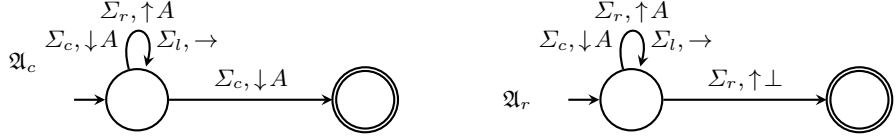
\begin{figure}[h]
	\centering
		\begin{tikzpicture}[thick]
			\node at (0, 0.5) {$\aut_c$};
			\node[state,initial] (call-init) at (1.5, 0) {};
			\node[state,accepting] (call-final) at (4.5, 0) {};
			
			% For some reason "-stealth" does not work here, so we have to use this workaround
			% Taken from http://tex.stackexchange.com/questions/27182/how-to-style-tikz-arrow-tip-for-state-loop
			\path[->,>=stealth] (call-init) edge [loop above] node [anchor=east,pos=.25] {$\calls, \push A$ }
				node [anchor = south,pos=.5] {$\returns, \pop A$}
				node [anchor=west,pos=.75] {$\locals, \local$} 
				(call-init);
			\path[->,>=stealth] (call-init) edge node [anchor=south] {$\Sigma_c, \push A$} (call-final);
			
			\node at (6.5,0) {$\aut_r$};
			\node[state,initial] (return-init) at (8, 0) {};
			\node[state,accepting] (return-final) at (11, 0) {};
			
			\path[->,>=stealth] (return-init) edge [loop above]
				node [anchor=east,pos=.25] {$\calls, \push A$ }
				node [anchor = south,pos=.5] {$\returns, \pop A$}
				node [anchor=west,pos=.75] {$\locals, \local$} (return-init);
			\path[->,>=stealth] (return-init) edge node [anchor=south] {$\Sigma_r, \pop \bot$} (return-final);
		\end{tikzpicture}
		\caption{The automata $\aut_c$ and $\aut_r$ for Example \ref{ex:vldl}. We depict the final states with a double circle.}
		\label{fig:vldl-example}
	\end{figure}
	
	Figure~\ref{fig:example-bnvpa} shows a \bnvpa $\aut$ describing the same specification as $\varphi$.
	For the sake of readability, we use $\Sigma_x^p = \set{P' \in \Sigma_x \mid p \in P'}$ and $\Sigma_x^{\neg p} = \set{P' \in \Sigma_x \mid p \notin P'}$ for $x \in \set{c,r,l}$.
	In contrast to $\varphi$, which uses only a single stack symbol, namely $A$, the \bnvpa $\aut$ has to rely on the two stack symbols~$P$ and~$\bar{P}$ to track whether or not $p$ held true after entering the module $m$.
	Moreover, there is no direct correlation between the logical structure of the specification and the structure of the \bnvpa, which exemplifies the difficulty of maintaining specifications given as \bnvpas.
	
	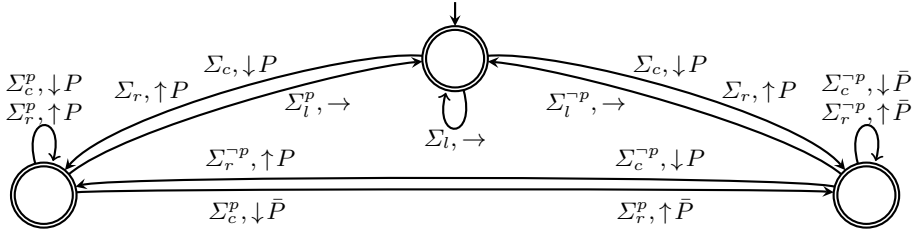
\begin{figure}[h]
		\centering
			\begin{tikzpicture}[thick,-stealth,node distance=1.2cm and 4.8cm]
				\node[state, accepting] (initial) at (0,0) {};
				\node[state, accepting,below left=of initial] (p) {};
				\node[state, accepting,below right=of initial] (q) {};
				
				\path (0,.75) edge (initial);
				
				\path (initial)
					edge [out=175,in=50,looseness=.5]
					node [anchor=south east,inner sep=0,pos=.4] {
						$\calls, \push P$
					} node [anchor=south east,inner sep=0,pos=.6] {
						$\returns, \pop P$
					} (p);
				\path (initial)
					edge [out=5,in=130,looseness=.5]
					node [anchor=south west,inner sep=0,pos=.4] {
						$\calls, \push P$
					} node [anchor=south west,inner sep=0,pos=.6] {
						$\returns, \pop P$
					} (q);
					
				\path (initial)
					edge [loop below]
					node [anchor=north,inner sep=0] {
						$\locals, \local$
					} (initial);
				
				\path (p)
					edge [out=40,in=-175,looseness=.5]
					node [anchor=north west,inner sep=0,pos=.6] {
						$\Sigma_l^p, \local$
					} (initial);
				\path (p)
					edge [out=5,in=175,looseness=.15]
					node [anchor=north,align=center,pos=.3] {
						$\calls^p, \push \bar{P}$
					} 
					node [anchor=north,align=center,pos=.7] {
						$\returns^p, \pop \bar{P}$
					} (q);
				\path (p)
					edge [loop above]
					node [anchor=south,inner sep=0,align=left] {
						$\calls^p, \push P$\\
						$\returns^p, \pop P$
					} (p);
				
				\path (q)
					edge [out=140,in=-5,looseness=.5]
					node [anchor=north east,inner sep=0,pos=.6] {
						$\locals^{\neg p}, \local$
					} (initial);
				\path (q)
					edge [out=165,in=15,looseness=.15]
					node [anchor=south,pos=.3] {
						$\Sigma_c^{\neg p}, \push P$
					} node [anchor=south,pos=.7] {
						$\Sigma_r^{\neg p}, \pop P$
					} (p);
				\path (q)
					edge [loop above]
					node [anchor=south,inner sep=0,align=left] {
						$\calls^{\neg p}, \push \bar P$\\
						$\returns^{\neg p}, \pop \bar P$
					} (q);
			\end{tikzpicture}
		\caption{A \bnvpa $\aut$ specifying the same language as $\varphi$ from Example~\ref{ex:vldl}.}
		\label{fig:example-bnvpa}
	\end{figure}
	
	Finally, one could also specify the property above using the formalism of \vltl~\cite{bozzelli14b}, which eschews automata in favor of augmented regular expression.
	One such formula would be $\psi \coloneq (\alpha;\ttrue) |\alpha \rangle \ffalse$, where the visibly rational expression $\alpha$ is defined as
	\[ \alpha \coloneq \left[ (p\cup q)^*c \left[ (q \square) \cup (p \square p)\right]
	r(p\cup q)^* \right]^{\circlearrowleft_\square} \curvearrowleft_\square (p\cup q)^* \]
	that uses the additional local action $\square$.
	Again, the conditional nature of the specification is lost in the translation to \vltl.
	Moreover, the temporal nature is not well visible in the formal specification due to use of the non-standard operator $\psi|\alpha\rangle\psi$.
	
	In contrast to these two alternative formal specifications, \vldl offers a readable and intuitive formalism that combines the well-known standard acceptors for visibly pushdown languages with guarded versions of the widely used temporal operators of \ltl and the readability of classical logical operators.
\end{example}

Note that the stack is simply used as a counter in Example \ref{ex:vldl}.
This technique suffices for the specification of other properties as well, such as tracking the path through the directory structure instead of the call stack.

\begin{example}
\label{ex:directories}
We consider a simplified system model, in which a user can move through directories and obtain and relinquish superuser rights.
To this end, we consider the set of atomic propositions $P = \set{\texttt{cd}_\downarrow, \texttt{cd}_\uparrow, \texttt{sudo}, \texttt{logout}}$, where $\texttt{cd}_\downarrow$ denotes moving into a subdirectory of the current working directory, $\texttt{cd}_\uparrow$ denotes moving to the parent directory, $\texttt{sudo}$ denotes the acquisition of elevated privileges, and $\texttt{logout}$ denotes relinquishing them.
For readability, we only define the pushdown alphabet for singleton subsets of $P$ and pick the alphabet $\vpdalphabet \coloneq (\set{\texttt{cd}_\downarrow}, \set{\texttt{cd}_\uparrow}, \set{\texttt{sudo}, \texttt{logout}})$ in order to formalize the property \quot{If the program acquires elevated privileges, it has to relinquish them before moving out of its current directory} \cite{chen02}.

We use the stack as a counter using the stack alphabet $\Gamma \coloneq \set{\bot, A}$.
Then the formula $\varphi \coloneq \bbox{\aut_\mathit{priv}}\bbox{\aut_\mathit{par}}\ffalse$, specifies the property above, where $\aut_\mathit{priv}$ accepts all prefixes of runs of the program that end with the acquisition of elevated privileges, and $\aut_\mathit{par}$ tracks the depth of the current working directory.
Figure \ref{fig:directories-example} depicts the automata $\aut_\mathit{priv}$ and $\aut_\mathit{par}$.

\begin{figure}[h]
	\begin{center}
		\begin{tikzpicture}[thick]
			
			\node at (0, 0.5) {$\aut_\mathit{priv}$};
			\node[state,initial] (priv-init) at (1.5, 0) {};
			\node[state,accepting] (priv-final) at (4, 0) {};
			
			\path[->,>=stealth] (priv-init) edge [loop above] node [anchor=south,align=center] {$\texttt{cd}_\downarrow, \downarrow A$\\$\texttt{cd}_\uparrow, \uparrow A$\\$\texttt{cd}_\uparrow, \uparrow \bot$\\$\texttt{logout}, \rightarrow$} (priv-init);
			\path[-stealth] (priv-init) edge node [anchor=south] {$\texttt{sudo}, \rightarrow$} (priv-final);
			
			\node at (6, 1.5) {$\aut_\mathit{par}$};
			
			\node[state,initial] (par-init) at (7.5, 1) {};
			\node[state,accepting] (par-acc) at (10, 2) {};
			\node[state] (par-rej) at (10, 0) {};
			
			\path[->,>=stealth] (par-init) edge [loop above] node [anchor=south,align=center] {$\texttt{cd}_\downarrow, \downarrow A$\\$\texttt{cd}_\uparrow, \uparrow A$} (par-init);
			
			\path[-stealth] (par-init) edge node [anchor=south,align=center,transform canvas={xshift=-.2cm}] {$\texttt{cd}_\uparrow, \uparrow \bot$} (par-acc);
			\path[-stealth] (par-init) edge node [anchor=north,align=center,transform canvas={xshift=-.4cm}] {$\texttt{logout}, \rightarrow$} (par-rej);
			
		\end{tikzpicture}
		\caption{The automata $\aut_{\mathit{priv}}$ and $\aut_\mathit{par}$ for Example \ref{ex:directories}.}
		\label{fig:directories-example}
	\end{center}
\end{figure}
\end{example}

While the previous example shows how to handle programs that can simply request a single set of elevated rights, in actual systems the situation is more complicated.
In reality, a program may request the rights of any user of the system by logging in as that user.
When logging out, the rights revert to those of the previously logged in user.
In the following example we use the stack to keep track of the currently logged in user and ensure that system calls are not executed with elevated privileges.

\begin{example}
\label{ex:user}
We remove some of the simplifications of the previous example and model the login mechanism of an actual system more precisely.
	To this end, let $P = \set{\texttt{exec}, \texttt{login}_s, \texttt{login}_u, \texttt{logout}}$, where $\texttt{exec}$ denotes the execution of a system call, $\texttt{login}_s$ and $\texttt{login}_u$ denote the login as the superuser and some other user, respectively, and $\texttt{logout}$ denotes logging the current user out and reverting to the previous user.
	The pushdown alphabet defined as $\vpdalphabet \coloneq (\set{\texttt{login}_s, \texttt{login}_u}, \set{\texttt{logout}}, \set{\texttt{exec}})$ allows us to keep track of the stack of logged in users.
	We want to specify the property \quot{Whenever the program has obtained elevated privileges, it does not leave the directory it originally obtained these privileges in before relinquishing them.}
	
	Recall that visibly pushdown automata are not allowed to inspect the top of the stack.
	Thus, in order to correctly trace the currently logged in user, we need to store both the current user and the previously logged in user on the stack.
	The automaton $\aut_\mathit{user}$ performs this bookkeeping using the stack alphabet $\Gamma \coloneq \set{(c, p) \mid c, p \in \set{s, u}}$, where $c$ denotes the currently logged in user, and $p$ denotes the previously logged in user.
	It moves to the state $u$ when a normal user is logged in and to state $s$ when a superuser is logged in.
	
	\begin{figure}[h]
		\begin{center}
		\begin{tikzpicture}[thick]
			\node at (0, 0.5) {$\aut_\mathit{user}$};
			\node[state, initial] (user-init) at (1.5, 0) {$u$};
			\node[state, accepting] (user-final) at (6.5, 0) {$s$};
			
			\path[->,>=stealth] (user-init) edge [loop above] node [anchor=south, align=center] {$\texttt{exec}, \rightarrow$\\$\texttt{login}_u, \push(u, u)$\\$\texttt{logout}, \pop(u, u)$} (user-init);
			\path[->,>=stealth] (user-final) edge [loop above] node [anchor=south, align=center] {$\texttt{exec}, \rightarrow$\\$\texttt{login}_s, \push(s, s)$\\$\texttt{logout}, \pop(s, s)$} (user-final);
			
			\path[-stealth] (user-init) edge [bend angle=15, bend left] node [anchor=south, align=center] {$\texttt{login}_s, \push (s, u)$\\$\texttt{logout}, \pop (u, s)$} (user-final);
			\path[-stealth] (user-final) edge [bend angle=15, bend left] node [anchor=north, align=center] {$\texttt{login}_u, \push (u, s)$\\$\texttt{logout}, \pop (s, u)$} (user-init);
		\end{tikzpicture}	
		\caption{The automaton $\aut_\mathit{user}$, which keeps track of the status of the currently logged in user.}
		\label{fig:user-example}
		\end{center}

	\end{figure}
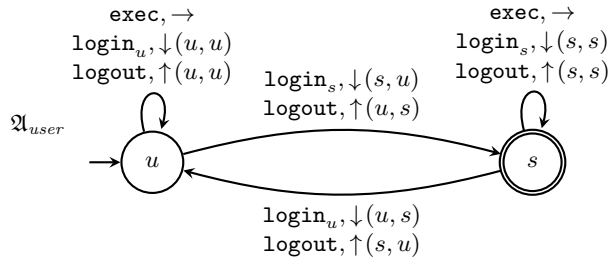
	
	Since the only action available to the program in this example apart from logging users in or out is to execute system calls, we do not need an additional automaton to capture the undesired behavior, but can simply use the atomic proposition $\texttt{exec}$ in the formula.
	Hence, the formula $\varphi \coloneq \bbox{\aut_\mathit{user}}\neg\texttt{exec}$ defines the desired behavior.
\end{example}

Due to the modular nature of \vldl, we can easily reuse existing automata and subformulas.
Consider, e.g., a setting similar to that of Examples~\ref{ex:directories} and~\ref{ex:user} with the added constraint that we want to make sure that superusers neither execute system calls, nor leave the directory they were in when they acquired superuser-privileges.
Using some simple modifications to $\aut_\mathit{user}$ and $\aut_\mathit{par}$ to work over an extended set of atomic propositions, we can specify the conjunction of the previously defined behaviors without having to construct new automata from scratch.

\section{\vldl Captures $\boldsymbol{\omega}$-VPL}
\label{sec:expressiveness}

In this section we show that \vldl characterizes $\omega$-VPL.
Recall that a language is in $\omega$-VPL if, and only if, there exists a \bnvpa recognizing it.
We provide effective constructions transforming \bnvpas into equivalent \vldl formulas and vice versa.

\begin{theorem}
\label{thm:vldl-eq-vpa}
	For any language of infinite words $L \subseteq \Sigma^\omega$ there exists a \bnvpa $\aut$ with $L(\aut) = L$ if, and only if, there exists a \vldl formula $\varphi$ with $L(\varphi) = L$.
	There exist effective translations for both directions.
\end{theorem}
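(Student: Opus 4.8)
The statement asserts a two-way equivalence, so I would establish the two translations independently, one per implication.

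\emph{From \vldl to \bnvpas.} Following the route announced in the introduction, the plan is to translate $\varphi$ into an equivalent one-way alternating jumping automaton (1-AJA) and then invoke Bozzelli's translation of 1-AJAs into $\omega$-\vpas of exponential size~\cite{bozzelli07}. I would build the 1-AJA $\aut_\varphi$ by induction on the structure of $\varphi$, maintaining the invariant that a state associated with a subformula $\psi$ accepts the suffix starting at position $k$ if, and only if, $(\alpha, k) \models \psi$. Atomic propositions and negations of atomic propositions are decided by inspecting the current letter; $\wedge$ and $\vee$ are realized by the universal and existential branching of the alternating automaton, respectively. The essential cases are the guarded operators. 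For $\ddiamond{\aut}\psi$ the 1-AJA existentially simulates an initial accepting run of the testing \vpa $\aut$ on the infix starting at $k$, guessing its end position $l$; whenever the simulated run visits a state $q$ with $t(q) \neq \ttrue$ a universal branch is spawned that verifies $t(q)$ at the current position through the inductively constructed sub-automaton, and once $\aut$ accepts at $l$ the computation continues by checking $\psi$ from $l$. The operator $\bbox{\aut}\psi$ is handled dually, with universal branching over the accepting runs of $\aut$ and the test conditions entering as antecedents rather than conjuncts. Crucially, the jumping mechanism of the 1-AJA is what permits the visibly pushdown stack behaviour of each guard to be tracked across the well-matched parts of the infix.

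\emph{From \bnvpas to \vldl.} For the converse I would first turn the given \bnvpa into an equivalent deterministic parity stair automaton using the construction of Löding et al.~\cite{loeding04}; such an automaton evaluates its parity condition only at the positions in $\steps(\alpha)$, which is legitimate since $\steps(\alpha)$ is always infinite. It then remains to phrase acceptance of this deterministic device as a \vldl formula. For every ordered pair of states $(p, q)$ and every colour $c$ I would construct a testing \vpa that, started at a step position in state $p$, recognizes exactly the infix up to the next step position, checks that the deterministic run reaches $q$ there, and records that $c$ is the dominating colour of the segment; observe that a \vpa can delimit a step-to-step segment precisely because the step structure is dictated by the stack height, which the \vpa monitors through its own stack. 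The parity acceptance --- the least colour occurring infinitely often at steps is even --- then becomes a Boolean combination of $\ddiamond{\cdot}$- and $\bbox{\cdot}$-formulas over these guards: a disjunction over even $c$ of \myquot{from some step onward every step has colour at least $c$} conjoined with \myquot{infinitely many steps carry colour exactly $c$}, prefixed by a diamond that reaches the first step from the initial configuration.

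\emph{Main obstacle.} I expect the principal difficulty to lie in the first direction, which is also the main technical contribution of the paper. The delicate point is to have the 1-AJA faithfully simulate the stack of each guarding testing \vpa on an infix that may contain calls and returns unmatched within that infix, while simultaneously discharging the test obligations and recursing into the sub-automata for the tested formulas; the acyclicity of the subformula/test relation imposed on \tnvpas is what makes this induction well-founded, and one must organize the construction so that the only exponential blow-up is Bozzelli's final 1-AJA-to-$\omega$-\vpa step rather than an accumulation across the induction. On the automata-to-logic side the corresponding subtlety is to ensure that the guards isolate exactly the step-to-step segments, so that the diamonds and boxes of the resulting formula quantify over precisely the step positions and the encoded parity condition coincides with the stair acceptance condition.
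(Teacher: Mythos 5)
Your proposal follows essentially the same route as the paper in both directions: from logic to automata via an inductive \oneaja construction (with existential simulation of the guard on a prefix, jumps at matched calls, and spawned copies for tests) followed by Bozzelli's exponential \oneaja-to-\bnvpa translation, and from automata to logic via Löding et al.'s deterministic parity stair automata with the stair-parity condition expressed as a Boolean combination of guarded diamond/box formulas over step positions. One small correction to your second direction: a finite-word guard cannot by itself certify that its endpoint is a step, since being a step depends on the entire infinite suffix; the paper instead characterizes step positions by the separate formula $\varphi_\stepsindex \coloneq \bbox{\aut_\stepsindex}\ffalse$ evaluated at the endpoint inside the diamond/box (and exploits that the stack may be cleared at steps when concatenating guards), and with that adjustment your construction matches the paper's.
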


In Section \ref{sec:expressiveness:stair-automata-to-vldl} we show the construction of \vldl formulas from \bnvpas via deterministic parity stair automata.
In Section \ref{sec:expressiveness:vldl-to-aja} we construct one-way alternating jumping automata from \vldl formulas.
These automata are known to be translatable into equivalent \bnvpas.
Both constructions incur an exponential blowup in size.
We show this blowup to be unavoidable in the construction of \bnvpas from \vldl formulas.
It remains open whether the blowup can be avoided in the construction for the other direction.

\subsection{From Stair Automata to VLDL}
\label{sec:expressiveness:stair-automata-to-vldl}

In this section we construct a \vldl formula of exponential size that is equivalent to a given \bnvpa $\aut$.
To this end, we first transform $\aut$ into an equivalent deterministic parity stair automaton (DPSA)  \cite{loeding04} in order to simplify the translation.
A PSA $\aut = (Q, \vpdalphabet, \Gamma, \Delta, I, \col)$ consists of
a VPS $\vpsys = (Q, \vpdalphabet,\Gamma, \Delta)$,
a set of initial states $I$,
and a coloring $\col\colon Q \rightarrow \nats$.
The automaton $\aut$ is deterministic if $\vpsys$ is deterministic and if $\card{I} = 1$.

A run of $\aut$ on a word $\alpha$ is a run of the underlying \vps $\vpsys$ on $\alpha$.
Recall that a step is a position at which the stack height reaches a lower bound for the remainder of the word.
A stair automaton only evaluates the parity condition at the steps of the word.
Formally, a run $\rho_\alpha = (q_0, \sigma_0)(q_1, \sigma_1)(q_2, \sigma_2)\cdots$ of~$\aut$ on the word $\alpha$ induces a sequence of colors $\col(\rho_\alpha) \coloneq \col(q_{k_0})\col(q_{k_1})\col(q_{k_2})\cdots$, where $k_0 < k_1 < k_2 \cdots$ is the ordered enumeration of the steps of $\alpha$.
A DPSA~$\aut$ accepts an infinite word~$\alpha$ if there exists an initial run $\rho$ of $\aut$ on $\alpha$ such that the largest color appearing infinitely often in $\col(\rho)$ is even.
The language $L(\aut)$ of a parity stair automaton $\aut$ is the set of all words $\alpha$ that are accepted by $\aut$.

\begin{lemma}
\label{lem:nvpa-to-stair}
	For every \bnvpa $\aut$ there exists an effectively constructible equivalent DPSA~$\aut_\stepsindex$ with $\card{\aut_\stepsindex} \in \mathcal O(2^{\card{\aut}})$ \cite{loeding04}.
\end{lemma}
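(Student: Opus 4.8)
The plan is to follow the standard route for determinizing $\omega$-visibly pushdown automata: abstract the stack behavior between steps into finitely many \emph{summaries}, reduce acceptance to a finite-state Büchi automaton reading the induced sequence of summaries, determinize that automaton into a deterministic parity automaton, and finally re-interpret the result as a DPSA whose parity condition is evaluated exactly at the steps. First I would abstract the behavior of $\aut$ on \emph{well-matched} infixes, i.e.\ infixes over which the stack returns to its initial height without ever dipping below it. For such an infix $w$ I associate a summary, namely the relation $R_w \subseteq Q \times \set{0,1} \times Q$ recording, for each pair of states $(q, q')$, whether $\aut$ has a run on $w$ from $q$ to $q'$ that begins and ends with the same stack content, together with a bit indicating whether such a run visits a state in $F$. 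Since summaries only speak about states, there are at most $2^{\mathcal O(\card{Q}^2)}$ of them, and they compose under concatenation of well-matched words.

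Next I would use the structural properties of $\steps(\alpha)$ recalled in the preliminaries: at a step the stack content below the current height is never touched again, and the infix strictly between two consecutive steps is well-matched. Hence the transition of $\aut$ from one step to the next is determined by the single symbol read at the step together with the summary of the well-matched infix that follows it. This lets me build a finite-state nondeterministic Büchi automaton $\mathcal B$ over the finite alphabet of $(\text{symbol}, \text{summary})$ pairs whose runs correspond exactly to the runs of $\aut$ restricted to steps, and whose acceptance condition fires whenever $\aut$ visits $F$ between two steps (recorded by the summary bit) or at a step itself. Crucially $\mathcal B$ has only $\mathcal O(\card{Q})$ states, as it need only track the state of $\aut$ reached at the current step. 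I would then apply the Safra/Muller--Schupp determinization to $\mathcal B$, obtaining a deterministic parity automaton $\mathcal D$ over the same alphabet with $2^{\mathcal O(\card{Q} \log \card{Q})}$ states and $\mathcal O(\card{Q})$ colors, accepting the same $\omega$-language of summary sequences.

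Finally I would reassemble a DPSA $\aut_\stepsindex$ that, while reading $\alpha$, deterministically computes on its stack the summary of the current well-matched infix (pushing the partial summary on calls and combining it on the matching return), feeds the resulting letter to $\mathcal D$ at every step, and inherits the coloring $\col$ from $\mathcal D$. Because both the summary computation and $\mathcal D$ are deterministic, and because the stair acceptance evaluates the parity condition exactly at the steps, $\aut_\stepsindex$ accepts $\alpha$ if, and only if, $\mathcal D$ accepts the induced summary sequence, which holds if, and only if, $\aut$ accepts $\alpha$. The state space of $\aut_\stepsindex$ is the product of $\mathcal D$ with the summary component (of which there are $2^{\mathcal O(\card{Q}^2)}$ many), and its stack alphabet stores partial summaries, so $\card{\aut_\stepsindex} \in \mathcal O(2^{\card{\aut}})$ as required.

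The delicate part is the last step. The summaries must be computed in a genuinely visibly pushdown manner, pushing only on calls, popping only on returns, and never inspecting the stack top except on returns, and one must prove that evaluating the parity condition of $\mathcal D$ at the steps faithfully simulates $\mathcal B$ reading the entire summary sequence. This hinges on the structural lemma that inter-step infixes are well-matched, so that summaries compose correctly, and on verifying that the Büchi flag carried by the summaries captures precisely the visits to $F$ occurring strictly between steps.
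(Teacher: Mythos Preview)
The paper does not prove this lemma at all: it is stated with the citation~\cite{loeding04} and used as a black box. There is therefore nothing in the paper to compare your proposal against. Your sketch is essentially the construction of L\"oding, Madhusudan and Serre that the citation points to --- summaries of well-matched infixes, a finite-state B\"uchi automaton over the induced summary sequence, Safra-style determinization to parity, and a deterministic VPS that recomputes the summaries on the fly --- so as a reconstruction of the cited result your outline is on the right track.

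One caveat on your description of the inter-step structure: it is not quite true that the infix \emph{strictly between} two consecutive steps is always well-matched in the sense you need. Consecutive steps may occur at increasing stack heights (when an unmatched call is read at a step), so the correct picture is that the letter read at a step is either a local action, an unmatched return (popping~$\bot$), or an unmatched call, and only in the last case is it followed by a nontrivial well-matched block before the next step. Your summary automaton $\mathcal B$ must therefore distinguish these cases rather than uniformly consuming a (symbol, summary) pair. This does not change the asymptotics, but the sentence ``the infix strictly between two consecutive steps is well-matched'' would need to be sharpened in an actual proof.
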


Since the stair automaton $\aut_\stepsindex$ equivalent to a \bnvpa $\aut$ is deterministic, the acceptance condition collapses to the requirement that the unique run of $\aut_\stepsindex$ on $\alpha$ must be accepting.
Another important observation is that every time $\aut_\stepsindex$ reaches a step of $\alpha$, the stack may be cleared:
Since the topmost element of the stack will never be popped after reaching a step, and since \vpas cannot inspect the top of the stack, neither this symbol, nor the ones below it have any influence on the remainder of the run.

Thus, the formula equivalent to $\aut_\stepsindex$ has to specify the following constraints: There must exist some state~$q$ of even color such that the stair automaton visits~$q$ at a step, afterwards the automaton may never visit a higher color again at a step, and each visit to~$q$ at a step must be followed by another visit to~$q$ at a step.
All of these conditions can be specified by \vldl formulas in a straightforward way, since $\aut_\stepsindex$ is deterministic and since there is only a finite number of colors in $\aut_\stepsindex$.

\begin{lemma}
\label{lem:stair-to-vldl}
For each DPSA $\aut$ there exists an effectively constructible equivalent \vldl formula $\varphi_{\aut}$ with $\card{\varphi_{\aut}} \in \mathcal O(\card{\aut}^2)$.
\end{lemma}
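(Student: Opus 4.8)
The plan is to turn the characterisation of acceptance sketched just before the lemma into an explicit \vldl formula. Since $\aut$ is deterministic it has a unique run on every $\alpha$, so $\alpha \in L(\aut)$ holds if, and only if, the largest color occurring infinitely often in the sequence $\col(\rho_\alpha)$ of colors seen at the steps of $\alpha$ is even. The central fact I would exploit is the one already noted in the text: if $k$ is a step, then the stack content present at $k$ is never popped again, so simulating $\aut$ on the suffix $\alpha_k\alpha_{k+1}\cdots$ from the state reached at $k$ but with a \emph{fresh} (empty) stack produces exactly the same sequence of states as the real run. I would prove this by induction on the position, using that $\sh$ never drops below its value at $k$ so that the fresh stack never underflows into the part below the baseline. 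Consequently, for any states $p,q$ one obtains a test-free \vpa $\aut_{p\to q}$ that, started in $p$, mirrors the (visible) stack actions of $\aut$ and accepts an infix exactly when $\aut$, simulated with a fresh stack, is in state $q$ at the end; evaluated in $\Raut$ from a step this faithfully reports $\aut$'s state at the target position.

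Using these guards I would assemble the formula as a disjunction over the states $q$ of even color,
\[ \varphi_\aut \coloneq \bigvee_{q\,:\,\col(q)\text{ even}} \ddiamond{\aut_{q_I \to q}}\bigl( \mathit{step} \wedge \Phi_q \bigr), \]
where $q_I$ is the unique initial state and $\Phi_q$ is the conjunction of (i) ``no step carries a color exceeding $\col(q)$'' and (ii) ``$q$ recurs at a step''. Part (i) I would write as $\bbox{\aut_{q\to >}}\neg\mathit{step}$, where $\aut_{q\to >}$ is the simulating guard accepting in every state of color greater than $\col(q)$; part (ii) as $\bbox{\aut_{q\to q}}(\mathit{step}\to\ddiamond{\aut_{q\to q}^{+}}\mathit{step})$, where $\aut_{q\to q}^{+}$ additionally forbids the empty infix so that from every step-visit of $q$ a strictly later one is reached. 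That a single state $q$ suffices, rather than the whole color class $\col(q)$, rests on determinism plus pigeonhole: as there are only finitely many states, whenever the maximal color seen infinitely often at steps is some even $c$, at least one state of color $c$ is itself visited at steps infinitely often, and past the (finitely many) steps of larger color it witnesses both (i) and (ii).

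The main obstacle is realising the predicate $\mathit{step}$: being a step is a property of the \emph{future} suffix, hence it cannot be read off a single finite infix by any \vpa. The natural attempt ``no future return pops below the current stack level'', written $\bbox{\aut_\bot}\ffalse$ with $\aut_\bot$ accepting an infix whose final action pops $\bot$ in the fresh-stack reading, is almost right but conflates a genuine descent below the baseline with a harmless return taken at global stack height $0$; at a height-$0$ position, which is always a step, it wrongly reports a non-step. I would resolve this by splitting on the globally relevant regime, using that a \vpa can record in its control state whether its fresh stack is currently empty (by marking the bottom-most pushed symbol). If height $0$ recurs infinitely often, the steps are exactly the height-$0$ positions, detected directly via the emptiness flag; if height $0$ occurs only finitely often, then past any step of positive height the height stays positive forever, so $\bbox{\aut_\bot}\ffalse$ characterises the steps correctly. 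Folding this dichotomy into the disjunction above — one family of disjuncts anchoring on height-$0$ positions, the other on positive-height steps — removes the ambiguity, and I expect verifying this bookkeeping to be the most delicate part of the argument.

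Finally I would bound the size. Each guard automaton is a copy of the \vps underlying $\aut$ with a designated initial state and a designated accepting set (plus at most a doubling of states and stack symbols for the emptiness flag), hence of $\mathcal O(\card{\aut})$ states; the formula uses $\mathcal O(\card{\aut})$ of them, one group per even-colored state $q$, and its Boolean structure likewise contributes $\mathcal O(\card{\aut})$ subformulas. Summing states and subformulas as prescribed by the size measure then yields $\card{\varphi_\aut}\in\mathcal O(\card{\aut}^2)$, as claimed.
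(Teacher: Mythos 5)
Your construction is essentially the paper's: both reduce acceptance of the DPSA to the existence of an even-colored state~$q$ that is visited at a step, after which no larger color occurs at a step and $q$ recurs at steps; both realize the guards by equipping the VPS underlying~$\aut$ with chosen initial and final state sets (excluding the empty word for the recurrence guard); and both obtain the $\mathcal O(\card{\aut}^2)$ bound by the same counting, relying on determinism and on the fact that the stack may be cleared at every step. The one point where you genuinely diverge is the step predicate, and your caution there is warranted rather than pedantic: the paper simply sets $\varphi_\stepsindex \coloneq \bbox{\aut_\stepsindex}\ffalse$ with $\aut_\stepsindex$ accepting on an unmatched return, which, exactly as you observe, conflates a descent below the baseline with a $\pop\bot$ taken at global height zero --- on a word such as $(crr)^\omega$ every height-zero position is a step, yet $\varphi_\stepsindex$ fails at all of them, so the disjuncts of the paper's $\varphi_\aut$ become unsatisfiable there. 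Your dichotomy (height zero recurring infinitely often versus eventually positive, with an emptiness flag in the guards) is a sound repair, costs only a constant factor, and leaves the quadratic size bound intact.
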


\begin{proof}
We first construct a formula $\varphi_\stepsindex$ such that, for each word~$\alpha$, we have $(\alpha, k) \models \varphi_\stepsindex$ if, and only if, $k \in \steps(\alpha)$:
Let $\aut_\stepsindex$ be a \vpa that accepts upon reading an unmatched return, constructed similarly to $\aut_r$ from Example~\ref{ex:vldl}.
Then we can define $\varphi_\stepsindex \coloneq \bbox{\aut_\stepsindex}\ffalse$, i.e., we demand that the stack height never drops below the current level by disallowing $\aut_\stepsindex$ to accept any prefix.

Let~$Q$ and~$\col$ be the state set and the coloring function of~$\aut$, respectively.
In the remainder of this proof, we write $_{I'}\aut_{F'}$ to denote the \tnvpa that we obtain from combining the \vps of $\aut$ with the sets  $I'$ and $F'$ of initial and final states.
Additionally, we require that $_{I'}\aut_{F'}$ does not accept the empty word.
This is trivially true if the intersection of $I'$ and $F'$ is empty, and easily achieved by adding a new initial state if it is not.
Furthermore, we define $Q_\mathit{even} \coloneq \set{ q \in Q \mid \col(q) \text{ is even}}$ and $Q_{>q} \coloneq \set{ q' \in Q \mid \col(q') > \col(q)}$.

Recall that $\aut$ accepts a word $\alpha$ if the largest color seen infinitely often at a step during the run of $\aut$ on $\alpha$ is even.
This is equivalent to the existence of a state $q$ as characterized above.
These conditions are formalized as
	\[\varphi_1(q) \coloneq \ddiamond{_{I}\aut_{\set{q}}} ( \varphi_{\stepsindex} \land \bbox{_{\set{q}}\aut_{Q_{>q}}}\neg\varphi_\stepsindex )\]
and
	\[\varphi_2(q) \coloneq \bbox{_{I}\aut_{\set{q}}}(\varphi_\stepsindex \rightarrow \ddiamond{_{\set{q}}\aut_{\set{q}}}\varphi_\stepsindex)\enspace ,\] respectively.
We obtain $\varphi_\aut \coloneq \bigvee_{q \in Q_\mathit{even}} ( \varphi_1(q) \land \varphi_2(q))$.

The construction of $\varphi_2(q)$ relies heavily on the determinism of the DPSA~$\aut$.
If $\aut$ were not deterministic, the universal quantification over all runs ending in $q$ at a step would also capture eventually rejecting partial runs.
Since there only exists a single run of $\aut$ on the input word, however, $\varphi_\aut$ has the intended meaning.
Furthermore, both $\varphi_1(q)$ and $\varphi_2(q)$ use the observation that we are able to clear the stack every time that we reach a step.
Thus, although the stack contents are not carried over between the different automata, the concatenation of the automata does not change the resulting run.
Hence, we have $\alpha \in L(\aut)$  if, and only if, $(\alpha, 0) \models \varphi_\aut$ and thus $L(\aut) = L(\varphi_\aut)$.
\end{proof}

Combining Lemmas \ref{lem:nvpa-to-stair} and \ref{lem:stair-to-vldl} yields that \vldl is at least as expressive as \bnvpa.
The construction inherits an exponential blowup from the construction of DPSAs from \bnvpas and proves one direction of Theorem~\ref{thm:vldl-eq-vpa}.

In the next section we show that each \vldl formula can be transformed into an equivalent \nvpa of exponential size.
Thus, the construction from the proof of Lemma~\ref{lem:stair-to-vldl} yields a normal form for \vldl formulas. In particular, formulas in this normal form only use temporal operators up to nesting depth three.

\begin{proposition}
\label{prop:vldl-normal-form}
Let $\varphi$ be a \vldl formula.
There exists an equivalent formula 
$\varphi'= \bigvee\nolimits_{i=1}^n ( \ddiamond{\aut_i^1} ( \varphi_{\stepsindex} \land \bbox{\aut_i^2}\neg\varphi_\stepsindex ) \land \bbox{\aut_i^1}(\varphi_\stepsindex \rightarrow \ddiamond{\aut_i^3}\varphi_\stepsindex))$,
for some $n$ that is doubly-exponential in $\card{\varphi}$, where all $\aut_i^j$ share the same underlying \vps, $\varphi_\stepsindex$ is fixed over all $\varphi$, and neither the $\aut_i^j$ nor $\varphi_\stepsindex$ contain tests.
\end{proposition}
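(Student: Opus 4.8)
The plan is to derive this as a corollary of the normal-form shape already exhibited in the proof of Lemma~\ref{lem:stair-to-vldl}, composed with the translations established earlier in this section. First I would apply the logic-to-automata direction of Theorem~\ref{thm:vldl-eq-vpa} to obtain a \bnvpa $\aut$ with $L(\aut) = L(\varphi)$; since this translation (via one-way alternating jumping automata) incurs a single exponential blowup, $\card{\aut}$ is exponential in $\card{\varphi}$. Next I would apply Lemma~\ref{lem:nvpa-to-stair} to turn $\aut$ into an equivalent DPSA $\aut_\stepsindex$ of size exponential in $\card{\aut}$, hence doubly-exponential in $\card{\varphi}$.

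The key observation is that the formula $\varphi_\aut$ produced from a DPSA in Lemma~\ref{lem:stair-to-vldl} is \emph{already} literally in the claimed normal form. Writing $Q$ and $\col$ for the states and coloring of $\aut_\stepsindex$, that construction yields
\[ \varphi_\aut = \bigvee_{q \in Q_\mathit{even}} \Bigl( \ddiamond{{}_{I}\aut_{\set{q}}} \bigl( \varphi_{\stepsindex} \land \bbox{{}_{\set{q}}\aut_{Q_{>q}}}\neg\varphi_\stepsindex \bigr) \land \bbox{{}_{I}\aut_{\set{q}}}(\varphi_\stepsindex \rightarrow \ddiamond{{}_{\set{q}}\aut_{\set{q}}}\varphi_\stepsindex) \Bigr), \]
so enumerating $Q_\mathit{even}$ as $q_1, \dots, q_n$ and setting $\aut_i^1 := {}_{I}\aut_{\set{q_i}}$, $\aut_i^2 := {}_{\set{q_i}}\aut_{Q_{>q_i}}$, and $\aut_i^3 := {}_{\set{q_i}}\aut_{\set{q_i}}$ gives exactly the required $\varphi'$, with the same automaton $\aut_i^1$ appearing under both the $\ddiamond{\cdot}$ and the $\bbox{\cdot}$.

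It then remains to verify the four structural requirements, each of which is immediate. The automata $\aut_i^1, \aut_i^2, \aut_i^3$ differ only in their initial and final state sets but are all built from the single underlying \vps of $\aut_\stepsindex$, so they share that \vps. As $\aut_\stepsindex$ is a (test-free) DPSA, none of these automata carry tests, and the guard $\aut_\stepsindex$ defining $\varphi_\stepsindex = \bbox{\aut_\stepsindex}\ffalse$ is likewise test-free; moreover $\aut_\stepsindex$ depends only on the fixed partition $\vpdalphabet$ of $2^P$ and not on $\varphi$, so $\varphi_\stepsindex$ is fixed across all formulas. The number of disjuncts is $n = \card{Q_\mathit{even}} \le \card{Q} \le \card{\aut_\stepsindex}$, which by the two preceding translations is doubly-exponential in $\card{\varphi}$. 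Equivalence $L(\varphi') = L(\varphi)$ follows by chaining the equivalences of Theorem~\ref{thm:vldl-eq-vpa} and Lemmas~\ref{lem:nvpa-to-stair} and~\ref{lem:stair-to-vldl}. The nesting-depth-three remark preceding the statement is then visible directly from the displayed shape, recalling that $\varphi_\stepsindex$ is itself a single box.

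There is essentially no hard step here; the only genuine bookkeeping is tracking the two exponential blowups to justify the doubly-exponential bound on $n$. The one point that warrants care is confirming that the logic-to-automata direction of Theorem~\ref{thm:vldl-eq-vpa} contributes only a single exponential, so that composing with Lemma~\ref{lem:nvpa-to-stair} yields exactly a double exponential, and that the claimed test-freeness of the guards genuinely survives the round trip through automata — which it does precisely because we re-extract the formula from a plain DPSA rather than from the original, possibly tested, guards of $\varphi$.
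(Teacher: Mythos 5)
Your proposal is correct and matches the paper's intended argument exactly: the paper derives the proposition by composing the exponential translation of \vldl into \bnvpas (via \oneajas, Lemmas~\ref{lem:vldl-to-aja} and~\ref{lem:aja-to-nvpa}) with the exponential determinization into a DPSA (Lemma~\ref{lem:nvpa-to-stair}) and then reading off the formula produced in the proof of Lemma~\ref{lem:stair-to-vldl}, which is literally of the stated shape. Your bookkeeping of the two exponential blowups and the test-freeness of the resulting guards is exactly the justification the paper leaves implicit.
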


Proposition~\ref{prop:vldl-normal-form} shows that tests are syntactic sugar.
However, removing them incurs a doubly-exponential blowup.
It remains open whether this blowup can be avoided. 

\subsection{From \vldl to \oneaja}
\label{sec:expressiveness:vldl-to-aja}

We now construct, for a given \vldl formula, an equivalent \bnvpa.
A direct construction would incur a non-elementary blowup due to the unavoidable exponential blowup of complementing \bnvpas.
Moreover, it would be difficult to handle runs of the \vpas over finite words and their embedded tests, which run in parallel.
Thus, we extend a construction by Faymonville and Zimmermann \cite{FaymonvilleZimmermann17}, where a similar challenge was addressed using alternating automata.
Instead of alternating visibly pushdown automata, however, we use one-way alternating jumping automata (1-AJA) , which can be translated into equivalent \bnvpas of exponential size~\cite{bozzelli07}.

A 1-AJA $\aut = (Q, \vpdalphabet, \delta, I, \col)$ consists of
	a finite state set~$Q$,
	a visibly pushdown alphabet~$\vpdalphabet$,
	a transition function~$\delta\colon Q \times \Sigma \rightarrow \bplus( \comms_Q )$,
	where $\comms_Q \coloneq \set{\rightarrow, \rightarrow_a} \times Q \times Q$, with $\bplus( \comms_Q )$ denoting the set of positive Boolean formulas over $\comms_Q$,
	a set $I \subseteq Q$ of initial states,
	and a coloring $\col\colon Q \rightarrow \nats$.
We define $\card{\aut} = \card{Q}$.
Intuitively, when the automaton is in state~$q$ at position $k$ of the word $\alpha = \alpha_0\alpha_1\alpha_2\cdots$, it guesses a set of commands $R \subseteq \comms_Q$ that is a model of $\delta(q, \alpha_k)$.
It then spawns one copy of itself for each command $(d, q, q')\in R$ and executes the command with that copy.
If $d = \rightarrow_a$ and if $\alpha_k$ is a matched call, the copy jumps to the position of the matching return of $\alpha_k$ and transitions to state $q'$.
Otherwise, i.e., if $d = \rightarrow$, the automaton advances to position $k+1$ and transitions to state $q$.
All copies of~$\aut$ proceed in parallel.
A single copy of $\aut$ accepts if the highest color visited infinitely often is even.
A \oneaja accepts $\alpha$ if all of its copies accept.

%Formally, a run of $\aut$ on an infinite word $\alpha = \alpha_0\alpha_1\alpha_2\cdots$ is an infinite $(Q \times \nats)$-labeled tree without leaves.
%For a vertex $v$ we write $\succs(v)$ to denote the successors of $v$ and for a set of vertices $V$ we write $\labels(V)$ to denote the set of labels of $V$.
%For each vertex $v$ labeled with $(q, k)$, there must exist a set $R \subseteq \comms_Q$ such that $R \models \delta(q, \alpha_k)$ and $\labels(\succs(v)) = \set{\app(q, k, r) \mid r \in R}$, where
%	$\app(q, k, (\rightarrow, q', q'')) = (q', k + 1)$,
%	$\app(q, k, (\rightarrow_a, q', q'')) = (q', l)$ if $\alpha_k$ is a matched call and $\alpha_{l}$ is its matching return,
%	and $\app(q, k, (\rightarrow_a, q', q'')) = (q'', k + 1)$ if $\alpha_k$ is not a matched call.
%A run is initial if its root is labeled with $(q_I, 0)$ for some $q_I \in I$.
%Each branch $\pi = (q_0, k_0)(q_1, k_1)(q_2, k_2)\cdots$ of the run induces a sequence of colors $\col(\pi) =\col(q_0)\col(q_1)\col(q_2)\cdots$.
%A run is accepting if for all branches $\pi$ of the run the highest color occurring infinitely often in $\pi$ is even.
%A 1-AJA $\aut$ accepts a word $\alpha$ if there exists an initial accepting run of $\aut$ on $\alpha$.

\begin{lemma}
\label{lem:aja-to-nvpa}
For every \oneaja~$\aut$ there exists an effectively constructible equivalent \bnvpa~$\aut_\mathit{vp}$ with $\card{\aut_\mathit{vp}} \in \mathcal O(2^{\card{\aut}})$ \cite{bozzelli07}.
\end{lemma}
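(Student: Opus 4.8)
The plan is to build $\aut_\mathit{vp}$ so that it guesses an accepting run of the \oneaja~$\aut$ on the input $\alpha$ and verifies, while reading $\alpha$ left to right, that every branch of that run satisfies the parity condition. Since $\aut$ is alternating, a run is a tree of threads; since $\aut$ is one-way, every thread only moves forward, so the threads currently located at position~$k$ form a subset $S_k \subseteq Q$. The feature that distinguishes this from ordinary alternation removal for word automata is the jumping command~$\rightarrow_a$: a thread taking $(\rightarrow_a, q, q')$ at a matched call at position~$k$ does not sit at any position strictly inside the matched infix, but reappears at the matching return. The \emph{central idea} is therefore to delegate exactly these suspended threads to the stack, exploiting that the visibly pushdown discipline pushes on calls and pops on matching returns.

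Concretely, I would let a macrostate of $\aut_\mathit{vp}$ be essentially a subset of $Q$ representing the current frontier~$S_k$, enriched with the bookkeeping needed for acceptance. On reading a call $\alpha_k \in \calls$, for each $q \in S_k$ guess a model $R_q \subseteq \comms_Q$ of $\delta(q, \alpha_k)$, and split the chosen commands into advancing ones, which contribute their first component to the next frontier $S_{k+1}$, and jumping ones, which contribute their second component~$q'$ to a suspended set~$J$. The automaton advances into the matched infix with frontier~$S_{k+1}$ and pushes~$J$, together with its acceptance bookkeeping, onto the stack. On reading the matching return $\alpha_{k'} \in \returns$ it pops~$J$ and merges those states into the frontier at~$k'$. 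Because pushes and pops are synchronized with calls and their matching returns by the very definition of a \vps, the suspended threads resume at exactly the right positions without the automaton ever inspecting the interior of the stack. A thread that guesses $\rightarrow_a$ at a call that is in fact unmatched must instead advance in its first-component state~$q$; this is precisely what the two-component shape of the commands encodes, and it is consistent with the symbol pushed at an unmatched call never being popped.

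The \emph{hard part} is the acceptance condition: the Büchi condition of $\aut_\mathit{vp}$ must certify that every branch of the guessed run tree satisfies the parity condition of $\aut$, including branches that descend into an infinite sequence of ever-deeper nested calls and never return to the top level. I would first reduce the parity acceptance of $\aut$ to a Büchi-style obligation by a standard transformation on alternating automata, a bounded blow-up recording per thread an index together with a Miyano--Hayashi breakpoint set tracking threads still owing a visit to a dominating even color. Threads running at the top level are then handled by the usual breakpoint flag, emptied infinitely often on accepting runs, while threads suspended inside a matched infix must have their obligations discharged within that infix; I would summarize each subcomputation by recording, in the stack symbol pushed at its call, the maximal color seen along the enclosed branches, and resolve it against the resumed obligation when the matching return is popped. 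The genuinely delicate case is the infinite-nesting branch, for which no return ever pops the relevant symbol: here one must ensure the breakpoint mechanism still forces a dominating even color infinitely often along such a branch, which is where the interaction between the stack summaries and the global Büchi flag has to be arranged carefully. Since a macrostate is a subset of $Q$ enriched with a constant amount of per-state bookkeeping, and the pushed symbols likewise range over an exponential set, the resulting $\aut_\mathit{vp}$ has $2^{\mathcal{O}(\card{\aut})}$ states, giving the claimed bound, and correctness follows by matching accepting runs of $\aut_\mathit{vp}$ with accepting run trees of $\aut$ up to the acceptance certificate.
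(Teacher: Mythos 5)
First, a point of reference: the paper does not prove this lemma at all --- it is imported verbatim from Bozzelli~\cite{bozzelli07}, so there is no in-paper proof to compare your argument against. Judged on its own merits, your skeleton is the right one and matches the spirit of the known construction: track the frontier of the one-way alternating run tree as a subset of $Q$, and park the targets of $\rightarrow_a$-commands on the stack at a call so that the visibly pushdown discipline hands them back exactly at the matching return. The first genuine gap is \emph{matchedness}: when the frontier reads a call, $\aut_\mathit{vp}$ must already commit to whether that call is matched, since a command $(\rightarrow_a,q,q')$ contributes $q$ to the next frontier if the call is unmatched and $q'$ to the suspended set otherwise. This is a property of the infinite future, so it must be guessed \emph{and certified}. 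Guessing \myquot{unmatched} is cheap to certify (reject if the tagged symbol is ever popped), but wrongly guessing \myquot{matched} at an unmatched call silently discards the suspended copies --- their symbol is never popped --- and would make $\aut_\mathit{vp}$ accept words that $\aut$ rejects. Enforcing \myquot{every symbol I tagged as matched is eventually popped} is a liveness constraint that a Büchi \vpa cannot check locally; it must be woven into the acceptance condition, and your sketch never does this.

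The second gap is the acceptance condition itself. Your plan to record in the pushed symbol \myquot{the maximal color seen along the enclosed branches} misreads the semantics: a copy executing $(\rightarrow_a,q,q')$ visits no state strictly inside the matched infix, so there is nothing about that infix to summarize for it; what must travel on the stack is that copy's \emph{own} outstanding obligation at the moment it jumps. Likewise, the case you single out as delicate --- a branch descending into ever-deeper nested calls --- is unproblematic, since such a copy simply stays in the frontier and is tracked position by position. The actually dangerous copy is one that is suspended at infinitely many matched calls: each suspension is finite, yet it can be sitting on the stack every time your breakpoint set empties, so the Büchi flag fires infinitely often while this copy never discharges its obligation. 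The standard repair is to evaluate breakpoints only at positions in $\steps(\alpha)$, where provably no copy is suspended --- which is exactly the stair phenomenon and again requires the matched/unmatched certification above. Until both certification mechanisms are in place the construction is unsound, even though the advertised $2^{\mathcal{O}(\card{\aut})}$ state count is plausible (and even that needs care, since per-copy color bookkeeping naively costs a factor of the number of colors per frontier state).
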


For a given \vldl formula $\varphi$ we now inductively construct a \oneaja  that recognizes the same language as $\varphi$.
The main difficulty lies in the translation of formulas of the form $\ddiamond{\aut}\varphi$, since these require us to translate \tnvpas over finite words into \oneajas over infinite words.
We do so by adapting the idea for the translation from \bnvpas to \oneajas by Bozzelli~\cite{bozzelli07} and by combining it with the bottom-up translation from \ldl into alternating automata by Faymonville and Zimmermann~\cite{FaymonvilleZimmermann17}.
%The main idea of the former construction is to simulate the operations on the stack using copies of the automaton.
%Whenever a call occurs, the \oneaja jumps to the matching return and guesses the state the \vpa will be in at that return.
%Additionally, the automaton spawns a copy of itself that verifies this guess.

\begin{lemma}
\label{lem:vldl-to-aja}
For any \vldl formula $\varphi$ there exists an effectively constructible equivalent \oneaja $\aut_\varphi$ with $\card{\aut_\varphi} \in  \mathcal{O}(\card{\varphi}^2)$.
\end{lemma}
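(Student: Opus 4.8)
The plan is to proceed by structural induction on $\varphi$, first bringing $\varphi$ into negation normal form so that negations occur only in front of atomic propositions. This is possible with only linear blow-up using De Morgan's laws together with the duality $\bbox{\aut}\psi \equiv \neg\ddiamond{\aut}\neg\psi$, which crucially does not complement the guard automaton. It then suffices to construct $\aut_\varphi$ for formulas of the shape $p$, $\neg p$, $\varphi_0 \wedge \varphi_1$, $\varphi_0 \vee \varphi_1$, $\ddiamond{\aut}\psi$, and $\bbox{\aut}\psi$. I will build each automaton over a state set disjoint from those produced for its subformulas, so that state counts add up across the syntax tree.

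The base and Boolean cases are routine. For $p$ (resp.\ $\neg p$) I would use a constant-size \oneaja whose initial transition at a symbol $a$ evaluates to $\ttrue$ if $p \in a$ (resp.\ $p \notin a$) and to $\ffalse$ otherwise. For $\varphi_0 \wedge \varphi_1$ and $\varphi_0 \vee \varphi_1$ I would take the disjoint union of the two inductively obtained automata and add a fresh single initial state whose transition inlines the conjunction, respectively disjunction, of the two initial transitions at the current symbol. Since the model provides no \emph{stay} move, inlining the first step of each subautomaton (rather than spawning it via an $\epsilon$-move) is precisely what lets the Boolean connectives be handled without advancing the position; this is the inlining idea of Faymonville and Zimmermann, and it adds only a constant number of states per connective.

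The heart of the construction is the case $\ddiamond{\aut}\psi$, where $\aut = (Q_\aut, \vpdalphabet, \Gamma, \Delta, I, F, t)$ is a \tnvpa. Here I must turn a guard with an explicit stack into a stackless jumping automaton, adapting Bozzelli's translation of \bnvpas to \oneajas. The automaton starts at the current position and guesses, position by position, an initial accepting run of $\aut$ on some finite infix $\alpha_k\cdots\alpha_{l-1}$; upon declaring that a final state of $\aut$ has been reached it inlines the initial transition of $\aut_\psi$ (again without advancing), thereby demanding $(\alpha, l) \models \psi$. Local actions are simulated by ordinary $\rightarrow$-moves. The visibly pushdown stack is eliminated by a summary technique: at a call the automaton uses states that are \emph{pairs} of $Q_\aut$-states, one tracking the current state and one recording the state $\aut$ must be in when the symbol pushed here is eventually popped. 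A call matched inside the chosen infix is jumped over with an $\rightarrow_a$-move, continuing the outer simulation at the matching return, while a second copy descends into the call via $\rightarrow$ to verify the guessed summary; a call left unmatched within the infix is instead descended into, and the accepting run is declared to end before its matching return. Whenever the simulated run passes through a state $q$ with $t(q) = \chi$, I conjoin the inlined initial transition of the inductively constructed $\aut_\chi$ at the current position, enforcing the test. Colors are assigned so that a simulation copy accepts exactly if it closes its finite obligation in finitely many steps, with the genuinely infinite behaviour delegated to $\aut_\psi$ and to the test automata, whose colorings are inherited. The case $\bbox{\aut}\psi$ is handled dually, replacing existential guessing and disjunctions by universal quantification and conjunctions and dualizing the acceptance condition; note that $\psi$ is negation-free input here, so no complementation of $\aut$ is needed.

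The main obstacle is exactly this diamond case: simulating the visibly pushdown stack faithfully without an explicit stack, and in particular distinguishing calls matched \emph{within the guessed infix} (which may be jumped over) from calls matched only later in $\alpha$ (which must be descended into, since the run must end before their matching return), all while interleaving the test automata and the continuation $\aut_\psi$ correctly. Getting the summary bookkeeping right is what forces the pair-of-states construction, and it is also what yields the quadratic bound: each guard $\aut_i$ contributes on the order of $\card{Q_{\aut_i}}^2$ states, and since $\sum_i \card{Q_{\aut_i}} \le \card{\varphi}$ we obtain $\sum_i \card{Q_{\aut_i}}^2 \le (\sum_i \card{Q_{\aut_i}})^2 \le \card{\varphi}^2$, while the Boolean connectives, atoms, and test-spawning contribute only linearly; hence $\card{\aut_\varphi} \in \mathcal{O}(\card{\varphi}^2)$ overall. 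Finally I would verify correctness by induction, showing that $(\alpha,k) \models \varphi$ holds if, and only if, the copy of $\aut_\varphi$ started in its initial state at position $k$ accepts, the key lemma being soundness and completeness of the summary simulation for the guard automata.
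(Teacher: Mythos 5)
Your proposal is correct and follows essentially the same route as the paper: structural induction with Faymonville--Zimmermann-style inlining for the Boolean connectives and final-state exits, a Bozzelli-style summary simulation for $\ddiamond{\aut}\psi$ (a main copy jumping over matched calls via $\rightarrow_a$ while spawned copies verify summary pairs, an unmatched-call mode forbidding further returns, tests enforced by conjoining the inlined initial transitions of their automata, and odd colors forcing the simulation part to terminate), yielding the same quadratic bound. The only cosmetic differences are that the paper treats $\bbox{\aut}\psi$ via the identity $\bbox{\aut}\psi \equiv \neg\ddiamond{\aut}\neg\psi$ and closure of \oneajas under complement rather than via negation normal form and a dual construction, and that its summary states additionally carry the pushed stack symbol $\Gamma$, a bookkeeping detail your pair-of-states description implicitly needs.
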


\begin{proof}
We construct the automaton inductively over the structure of $\varphi$.
	The case $\varphi = p$ is trivial.
	For Boolean operations, we obtain $\aut_\varphi$ by closure of \oneajas under these operations~\cite{bozzelli07}.
	If $\varphi = \bbox{\aut}\varphi'$ we use the identity $\bbox{\aut}\varphi' \equiv \neg\ddiamond{\aut}\neg\varphi'$ and construct $\aut_{\neg\ddiamond{\aut}\neg\varphi'}$ instead.

	We now consider $\varphi = \ddiamond{\aut}\varphi'$, where $\aut$ is some \tnvpa and construct a 1-AJA~$\aut_\varphi$.
	By induction we obtain a \oneaja $\aut'$ equivalent to $\varphi'$.
	$\aut_\varphi$ simulates a run of $\aut$ on a prefix of~$\alpha$ and, upon acceptance, nondeterministically transitions into~$\aut'$.
	
	Consider an initial run of $\aut$ on a prefix $w$.
	Since $w$ is finite, $\steps(w)$ is finite as well.
	Hence, each stack height may only be encountered finitely often at a step.
	At the last visit to a step of a given height, $\aut$ either accepts, or it reads a call action.
	The symbol pushed onto the stack in that case does not influence the remainder of the run.
	We show such a run on the word $clcrrcclrll$ in Figure~\ref{fig:nvpa-run}, where $c$ is a call, $r$ is a return, and $l$ is a local action.
	
	\begin{figure}[htbp]
	\centering
	\begin{tikzpicture}[thick,xscale=.9,yscale=.9]
		
		\foreach \letter [count=\position] in {c,l,c,r,r,c,c,l,r,l,l} {
			\node at ($(\position - .5, 5)$) {\letter};
		}
		\foreach \state in {0,...,11} {
			\node at ($(\state, 4.5)$) {$q_{\state}$};
		}
		\foreach \position in {3,7,8} {
			\node at ($(\position, 4)$) {$B$};
		}
		\foreach \position in {1,2,3,4,6,7,8,9,10,11} {
			\node at ($(\position, 3.6)$) {$A$};
		}
		\foreach \position in {0,...,11} {
			\node at ($(\position, 3.2)$) {$\bot$};
		}

		\node at (-.5,5)   {$\alpha$};
		\node at (-.5,4.5) {$q$};
		\node at (-.5,3.5) {$\gamma$};
		
		\foreach \ypos in {5,4.5,3.5} {
			\node at (12,\ypos) {$\cdots$};
		}
		
		\foreach \ycoord in {4.25,4.75} {
			\draw[thin,black] (-.75,\ycoord) -- (11.5,\ycoord);
		}
		\foreach \xcoord in {-.25} {
			\draw[thin,black] (\xcoord,3) -- (\xcoord,5.25);
		}
	\end{tikzpicture}
	\caption{Run of a \nvpa on the word $clcrrcclrll$.}
	\label{fig:nvpa-run}
	
	\end{figure}
	
	The idea for the simulation of the run of $\aut$ by~$\aut_\varphi$ is to have a main copy of~$\aut_\varphi$ that jumps along the steps of the input word.
	When $\aut_\varphi$ encounters a call $c \in \calls$ it guesses whether or not $\aut$ encounters the current stack height again.
	If it does, then $\aut_\varphi$ guesses $q', q'' \in Q$ and $A \in \Gamma$ such that $(q, c, q', A)$ is a transition of $\aut$, it jumps to the matching return of $c$ with state $q''$ and it spawns a copy that verifies that $\aut$ can go from the configuration $(q', A)$ to the configuration $(q'', \bot)$.
	If~$\aut$ never returns to the current stack height, then~$\aut_\varphi$ only guesses $q' \in Q$ and $A \in \Gamma$ such that $(q, c, q', A)$ is a transition of $\aut$, moves to state $q'$, and stores in its state space that it may not read any returns anymore.
	This is repeated until the main copy guesses that $\aut'$ accepts the prefix read so far.
	
\begin{figure}[htbp]
	\centering
	\begin{tikzpicture}[thick,xscale=.9]
		
		\newcommand{\labelscale}{.8}
		\newcommand{\shcolor}{gray}
		\newcommand{\maincolor}{black}
		\newcommand{\vericolor}{black}
		\newcommand{\veristyle}{dashed}
	
		% Usage: \jump{from-coord}{to-coord}{style}
		\newcommand{\jump}[3]{\draw[thick,#3] (#1) edge [bend right=15,-stealth] (#2);}
		% Usage: \spawn{from-coord}{to-coord}{style}
		\newcommand{\spawn}[2]{\draw[thick,#2,dotted] (#1) edge [bend left = 15,-stealth] ($(#1) + (1,1)$);}
		% Usage: \accept{coord}{style}
		\newcommand{\accept}[2]{
			\draw[thick,#2] (#1) edge[-stealth] ($(#1) + (1,-.1)$);
			\node[text=\vericolor,anchor=west,scale=\labelscale] at ($(#1) + (1,-.1)$) {$\checkmark$};
		}
		% Usage: \xxlabel{coord}{label}{style}
		\newcommand{\mylabel}[4]{\node[text=#3,anchor=#4,scale=\labelscale,fill=white,inner sep=0,outer sep=.5em] at (#1) {#2};}
		\newcommand{\nelabel}[3]{\mylabel{#1}{#2}{#3}{south west}}
		\newcommand{\nwlabel}[3]{\mylabel{#1}{#2}{#3}{south east}}
		\newcommand{\selabel}[3]{\mylabel{#1}{#2}{#3}{north west}}
		\newcommand{\swlabel}[3]{\mylabel{#1}{#2}{#3}{north east}}
		\newcommand{\northlabel}[3]{\mylabel{#1}{#2}{#3}{south}}
		\newcommand{\southlabel}[3]{\mylabel{#1}{#2}{#3}{north}}
		\newcommand{\eastlabel}[3]{\mylabel{#1}{#2}{#3}{west}}
		\newcommand{\westlabel}[3]{\mylabel{#1}{#2}{#3}{east}}
	
		% X-Tics
		\foreach \letter [count=\position] in {c,l,c,r,r,c,c,l,r,l,l} {
			\node[anchor=south] at ($(\position - .5, 2.1)$) {\letter};
		}
		
		% Y-Tics
		\foreach \heightlabel in {0,1,2} {
			\node at (-1,\heightlabel) {\heightlabel};
		}
		
		% Grid
		\draw[step=1.0,gray,very thin] (0,0) grid (11.25,2.1);
		
		% Stackheight
		\def\lastheight{0}
		\foreach \height [count=\position,remember=\height as \lastheight] in {1,1,2,1,0,1,2,2,1,1,1} {
			\draw[\shcolor] ($(\position - 1, \lastheight)$) -- (\position, \height);
		}
		
		% Run of the main automaton
		\draw[thick,\maincolor] (0,0) edge [bend right=5,-stealth] (5,0);
		\draw[thick,\maincolor] (5,0) edge [bend right=15,-stealth] (6,1);
		\draw[thick,\maincolor] (6,1) edge [bend right=5,-stealth] (9,1);
		\draw[thick,\maincolor] (9,1) edge [bend right = 15,-stealth] (10,1);
		\draw[thick,\maincolor] (10,1) edge [bend right = 15,-stealth] (11,1);
		
		\def\lastx{1}
		\def\lasty{1}
		\foreach \x/\y [remember=\x as \lastx,remember=\y as \lasty] in {2/1,4/1} {
			\jump{\lastx,\lasty}{\x,\y}{\vericolor,\veristyle}
		}
		
		\def\lastx{7}
		\def\lasty{2}
		\foreach \x/\y [remember=\x as \lastx,remember=\y as \lasty] in {8/2} {
			\jump{\lastx,\lasty}{\x,\y}{\vericolor,\veristyle}
		}
		
		\spawn{0,0}{\maincolor}
		\spawn{2,1}{\vericolor}
		\spawn{6,1}{\maincolor}
		
		\accept{3,2}{\vericolor,\veristyle}
		\accept{4,1}{\vericolor,\veristyle}
		\draw[thick,\vericolor,\veristyle] (8,2) edge[-stealth] (9,1.7);
		\node[text=\vericolor,anchor=west,scale=\labelscale] at (9,1.7) {$\checkmark$};
		
		\nwlabel{.2,0}{$(q_0, \ini)$}{\maincolor}
		\northlabel{1,1}{$(q_1, q_5, A)$}{\vericolor}
		\southlabel{2,1}{$(q_2, q_5, A)$}{\vericolor}
		\westlabel{2.9,2}{$(q_3, q_4, B)$}{\vericolor}
		\nelabel{3.9,1}{$(q_4, q_5, A)$}{\vericolor}
		\eastlabel{5.1,0}{$(q_5, \ini)$}{\maincolor}
		\selabel{5.9,1}{$(q_6, \fin)$}{\maincolor}
		\westlabel{6.8,2}{$(q_7, q_9, B)$}{\vericolor}
		\eastlabel{8.3,2.05}{$(q_8, q_9, B)$}{\vericolor}
		\southlabel{9,1}{$(q_9, \fin)$}{\maincolor}
		\northlabel{10,1}{$(q_{10}, \fin)$}{\maincolor}
		\southlabel{11,1}{$(q_{11}, \fin)$}{\maincolor}
		
	\end{tikzpicture}
		
	\caption{Simulation of the run from Figure~\ref{fig:nvpa-run} by a \oneaja.}
	\label{fig:stack-height}	
	\end{figure}
		
	We show the run of such a \oneaja corresponding to the run of $\aut$ in Figure~\ref{fig:nvpa-run} is shown in Figure \ref{fig:stack-height}.
	The gray line indicates the stack height, while the solid and dashed black paths denote the run of the main automaton and those of the verifying automata, respectively.
	Dotted lines indicate spawning a verifying automaton.
	For readability, the figure does not include copies of the automata that are spawned to verify that the tests of $\aut$ hold true.
	The main copy of the automaton uses states of the form~$(q,0)$ if it has not yet ignored any call actions, and states of the form~$(q,1)$ if it has done so.
	The states $(q, q', A)$ denote verification copies that verify $\aut$'s capability to move from the configuration $(q,A)$ to the configuration $(q',\bot)$.
	The verification automata work similarly to the main automaton, except that they assume that all pushed symbols to be eventually popped and reject if they encounter an unmatched call.
	We now construct the \oneaja~$\aut_\varphi$ equivalent to~$\ddiamond{\aut}\varphi'$ formally.
	
Let $\aut = (Q^\aut, \vpdalphabet, \Gamma^\aut, \Delta^\aut, I^\aut, F^\aut, t^\aut)$, let $\aut' = (Q', \vpdalphabet, \delta', I', \col')$ be the \oneaja equivalent to~$\varphi'$ and, for each $\varphi_i \in \range(t^\aut)$, let $\aut_i = (Q^i, \vpdalphabet, \delta^i, I^i, \col^i)$ be a \oneaja equivalent to~$\varphi_i$.
The automata~$\aut'$ and~$\aut_i$ are obtained by induction.
	
Formally, we use the set of states 
\[ Q \coloneq (Q^\aut \times \{\ini, \fin\}) \cup (Q^\aut \times Q^\aut \times \Gamma) \cup \set{\rej} \cup Q' \cup \bigcup\nolimits_{{\varphi_i \in \range(t)}}Q^i \enspace, \]
where the state $\rej$ is a rejecting sink.
The states from $Q^\aut \times \set{\ini, \fin}$ are used to simulate the original automaton at steps with stack height $0$~$(Q^\aut \times \set{\ini})$ and stack height at least $1$~$(Q^\aut \times \set{\fin})$, respectively.

For the sake of readability, we define the transition function for the different components of the automaton separately.
We also write $(\rightarrow, q)$ and $(\rightarrow_a, q)$ as shorthands for $(\rightarrow, q, \rej)$ and $(\rightarrow_a, \rej, q)$.
The easiest part of the transition function is that which controls the rejecting sink $\rej$, which is defined as $\delta_\mathit{sink}(\rej, a) \coloneq (\rightarrow, \rej)$ for all $a \in \Sigma$.

When encountering a final state of $\aut$, we need to be able to move to the successors of the initial states of $\aut'$ in order to model acceptance of $\aut$ on the finite prefix read so far.
To achieve a uniform presentation, we define the auxiliary formula
	$\chi^f(q, a) \coloneq \bigvee_{q_I' \in I'} \delta'(q_I', a)$ if $q \in F^\aut$ and $\chi^f(q, a) \coloneq (\rightarrow, \rej)$ otherwise.
	
Moreover, we need notation to denote transitions into the automata $\aut_i$ implementing the tests of $\aut$.
More precisely, since we only transition into these automata upon leaving the states labeled with the respective test, we need to transition into the successors of one of the initial states of the implementing automata.
To this end, we define the auxiliary formula $\theta_q^a \coloneq \bigvee_{q_I \in I^i} \delta^i(q_I, a)$, where $t(q) = \varphi_i$.

For local actions the main copy of the automaton can simply simulate the behavior of $\aut$ on the input word.
Hence we have
\begin{multline*}
 \delta_\mathit{main}((q, b), l) \coloneq \Big[ \chi^f(q, l) \lor \bigvee\nolimits_{{(q, l, q') \in \Delta}} (\rightarrow, (q', b)) \Big] \land \theta_q^l \\ \text{\quad  for } l \in \locals, b \in \set{\ini, \fin}
\end{multline*}

When reading a call, the automaton nondeterministically guesses whether it jumps to the matching return or whether it simulates the state transition while ignoring the effects on the stack.
In the former case, it guesses a transition $(q, c, q', A) \in \Delta$ and a state $q'' \in Q$, spawns a verification automaton verifying that it is possible to go from $q'$ to $q''$ by popping $A$ off the stack in the final transition, and continues at the matching return in state $q''$.
In the latter case it ignores the effects on the stack and denotes that it may not read any returns from this point onwards by setting the binary flag in its state to $\fin$.
\begin{multline*}
	\delta_\mathit{main}((q, b), c)\coloneq \Big[ \chi^f(q) \lor \\
	\bigvee\nolimits_{{(q, c, q', A) \in \Delta, q'' \in Q}} \Big[ (\rightarrow, (q', q'', A)) \land (\rightarrow_a, (q'', b)) \Big] \lor \\
	\bigvee\nolimits_{{(q, c, q', A) \in \Delta}} (\rightarrow, (q', \fin)) \Big] \land \theta_q^c \text{\quad  for } c \in \calls, b \in \set{\ini, \fin}
\end{multline*}

The main automaton may only handle returns as long as it has not skipped any calls.
If it encounters a return after having skipped a push action, it rejects the input word, since the return falsifies its earlier guess of an unmatched call.
\begin{align*}
	&\delta_\mathit{main}((q, \ini), r) \coloneq \Big[ \chi^f(q, r) \lor \bigvee\nolimits_{{(q, r, \bot, q') \in \Delta}} (\rightarrow, (q', \ini)) \Big] \land \theta_q^r \text{\quad  for } r \in \returns \\
	&\delta_\mathit{main}((q, \fin), r) \coloneq (\rightarrow, \rej) \text{\quad  for } r \in \returns
\end{align*}

The transition function $\delta_\mathit{main}$ determines the behavior of the main automaton.
It remains to define the behavior of the verifying automata.
These behave similarly to the main automaton on reading local actions and calls.
The main difference in handling calls is that these automata do not need to guess whether or not a call is matched:
Since they are only spawned on reading supposedly matched calls and accept upon reading the matching return, all calls they encounter must be matched as well.
Additionally, they never transition to the automaton $\aut'$, but merely to the automaton implementing the test of the current state upon having verified their guess.
\begin{align*}
	\delta_\mathit{ver}((q, q', A), l) \coloneq & \Big[ \bigvee\nolimits_{{(q, l, q'') \in \Delta}} (\rightarrow, (q'', q', A)) \Big] \land \theta_q^l \text{\quad  if } l \in \locals \\
	\delta_\mathit{ver}((q, q', A), c) \coloneq & \Big[ \bigvee\nolimits_{{(q, c, q'', A') \in \Delta, q''' \in Q}} (\rightarrow, (q'', q''', A')) \land \\
	&\enspace\enspace (\rightarrow_a, (q''', q', A)) \Big] \land \theta_q^c \text{\quad  if } c \in \calls \\
	\delta_\mathit{ver}((q, q', A), r) \coloneq & \theta_q^r \text{\quad  if } r \in \returns, (q, r, A, q') \in \Delta \\
	\delta_\mathit{ver}((q, q', A), r) \coloneq & (\rightarrow, \rej) \text{\quad  if } r \in \returns, (q, r, A, q') \not\in \Delta
\end{align*}
We then define the complete transition function $\delta$ of $\aut_\varphi$ as the union of the previously defined partial transition functions.
Since their domains are pairwise disjoint, this union is well-defined.
\[ \delta \coloneq \delta_\mathit{sink} \cup \delta' \cup \bigcup\nolimits_{{\varphi_i \in \range(t)}} \delta^i \cup \delta_\mathit{main} \cup \delta_\mathit{ver} \]

The coloring of $\aut_\varphi$ is obtained by copying the coloring of $\aut'$ and the $\aut^i$ and by coloring all states resulting from the translation of $\aut$ with $1$.
Thus, we force every path of the run of $\aut$ to eventually leave this part of the automaton, since this automaton only accepts a finite prefix of the input word.
The \oneaja 
\[ \aut_\varphi \coloneq (Q, \vpdalphabet, \delta, I^\aut \times \set{\ini}, \col \cup \col' \cup \bigcup\nolimits_{{\varphi_i \in \range(t^\aut)}}\col^i) \]
then recognizes the language of $\varphi = \ddiamond{\aut}\varphi'$, where $\col: q \mapsto 1$ for all $q \in (Q^\aut \times \{\ini, \fin\}) \cup (Q^\aut \times Q^\aut \times \Gamma) \cup \set{\rej}$.
\end{proof}

By combining Lemmas \ref{lem:aja-to-nvpa} and \ref{lem:vldl-to-aja} we see that \bnvpas are at least as expressive as \vldl formulas.
This proves the direction from logic to automata of Theorem~\ref{thm:vldl-eq-vpa}.
The construction via \oneajas yields automata of exponential size in the number of states.
This blowup is unavoidable, which we show by relying on the analogous lower bound for translating \ltl into Büchi automata, obtained by encoding an exponentially bounded counter in \ltl.

\begin{lemma}
\label{lem:conciseness}
	There exists a pushdown alphabet $\vpdalphabet$ such that for all $n \in \nats$ there exists a language $L_n$ that is defined by a \vldl formula over $\vpdalphabet$ of polynomial size in $n$, but every \bnvpa over $\vpdalphabet$ recognizing $L_n$ has at least exponentially many states in $n$.
\end{lemma}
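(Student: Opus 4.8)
The plan is to establish the separation between VLDL's succinctness and that of \bnvpas by leveraging the well-known exponential lower bound for translating \ltl into Büchi automata, transported into the visibly pushdown setting. The classical witness is the family of \ltl formulas, of size polynomial in $n$, encoding an $n$-bit binary counter: these require Büchi automata with at least $2^n$ states because any automaton recognizing the language must distinguish exponentially many counter values. First I would fix a trivial pushdown alphabet in which all letters are local actions, say $\vpdalphabet \coloneq (\emptyset, \emptyset, 2^P)$ for a suitable finite set $P$ of atomic propositions, so that no genuine stack behavior is available and both VLDL and \bnvpas degenerate to their finite-state counterparts over this alphabet.

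Over such a degenerate alphabet, a \bnvpa can never push or pop, hence its stack is irrelevant and it is semantically equivalent to an ordinary Büchi automaton with the same state set; dually, a VLDL formula over $\vpdalphabet = (\emptyset,\emptyset,2^P)$ is, by Lemma~\ref{lem:ldl-to-vldl} read in reverse together with the linear embedding of \ltl into \ldl, expressively no stronger and no more succinct than the corresponding \ltl/\ldl formula. I would therefore take $L_n$ to be the $\omega$-language of the standard $n$-bit counter formula. The two facts I need are: (i) there is a VLDL formula of size polynomial in $n$ defining $L_n$, obtained by applying the linear-time embedding of \ltl into VLDL (via Lemma~\ref{lem:ldl-to-vldl}) to the polynomial-size \ltl counter formula; and (ii) every \bnvpa over $\vpdalphabet$ recognizing $L_n$ has at least $2^n$ states. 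Claim (i) is immediate from the cited embedding. For claim (ii), since every \bnvpa over this local-only alphabet collapses to a Büchi automaton of equal size, any such \bnvpa with fewer than exponentially many states would yield a sub-exponential Büchi automaton for $L_n$, contradicting the classical \ltl-to-Büchi lower bound.

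The key steps, in order, are thus: (1) define $P$ and the all-local alphabet $\vpdalphabet$; (2) recall or restate the counter language $L_n$ and the polynomial-size \ltl formula defining it, noting the standard $2^n$-state lower bound for Büchi automata; (3) push this \ltl formula through the embedding into VLDL to obtain the required polynomial-size VLDL formula, establishing the upper half of the statement; and (4) argue that a \bnvpa over $\vpdalphabet$ is effectively a Büchi automaton, so the lower bound transfers verbatim, establishing that every \bnvpa for $L_n$ needs at least exponentially many states.

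The main obstacle is not in the \ltl side, which is entirely standard, but in cleanly justifying step~(4): I must argue that restricting to a purely-local alphabet genuinely eliminates the extra power of the stack, so that a VPA's stack cannot be exploited to save states. This requires observing that over $\vpdalphabet = (\emptyset, \emptyset, 2^P)$ the transition relation of any \bnvpa contains only local transitions, the stack content stays fixed at $\bot$ throughout every run, and hence the configuration graph is isomorphic to the transition graph of a Büchi automaton on the same states. Once that reduction is made precise, the counting lower bound for $L_n$ applies directly and yields the claimed exponential gap.
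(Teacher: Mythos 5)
Your proposal is correct and follows essentially the same route as the paper: an all-local pushdown alphabet, the polynomial-size \ltl counter formula for $L_n$ pushed through Lemma~\ref{lem:ldl-to-vldl}, and the observation that over a local-only alphabet a \bnvpa degenerates to a Büchi automaton, so the classical exponential lower bound transfers. No substantive differences.
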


\begin{proof}
	We use the pushdown alphabet $\vpdalphabet = (\calls, \returns, \locals) = (\emptyset, \emptyset, \{0, 1, \#\})$.
For any $n \in \nats$ and any $i \in [0;2^n-1]$ we write $\bin{i}_n$ to denote the binary encoding of $i$ using~$n$ bits.
Moreover, we define the language $L_n \coloneq \set{ \# \bin{0}_n \# \cdots \# \bin{2^n-1}_n\#^\omega}$, which only contains a single word encoding an~$n$-bit counter.
It is known that there exists an \ltl formula of polynomial length in $n$ that defines~$L_n$.
Thus, there also exists a \vldl formula of polynomial length defining this language due to Lemma~\ref{lem:ldl-to-vldl}.

Furthermore, since all symbols are local actions, any \bnvpa recognizing $L_n$ cannot use its stack and thus has to work like a traditional finite automaton with Büchi acceptance.
Again, it is known that all Büchi automata recognizing~$L_n$ have at least exponentially many states in~$n$.
Consequently, all \bnvpas recognizing $L_n$ have at least exponentially many states in $n$.
\end{proof}

After having shown that \vldl has the same expressiveness as \bnvpas, we now turn our attention to several decision problems for this logic.
Namely, we study the satisfiability and the validity problem, as well as the model checking problem.
Moreover, we consider the problem of solving visibly pushdown games with \vldl winning conditions.

\section{Satisfiability and Validity are \exptime-complete}
\label{sec:satisfiability-validity}

\newcommand{\conf}[0]{\mathit{Conf}}

We say that a \vldl formula $\varphi$ is satisfiable if it has a model.
Dually, we say that $\varphi$ is valid if all words are models of $\varphi$.
Instances of the satisfiability and validity problem consist of a \vldl formula $\varphi$ and ask whether $\varphi$ is satisfiable and valid, respectively.
Both problems are decidable in exponential time.
We also show both problems to be \exptime-hard.

\begin{theorem}
\label{thm:satisfiability-exptime-complete}
The satisfiability problem and the validity problem for \vldl are \exptime-complete.	
\end{theorem}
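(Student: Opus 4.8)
The plan is to prove membership in \exptime and \exptime-hardness separately, treating satisfiability as the primary problem and obtaining validity by duality. For membership, I would first translate the input \vldl formula $\varphi$ into an equivalent \oneaja $\aut_\varphi$ with $\card{\aut_\varphi} \in \mathcal{O}(\card{\varphi}^2)$ using Lemma~\ref{lem:vldl-to-aja}, and then into an equivalent \bnvpa $\aut_\mathit{vp}$ with $\card{\aut_\mathit{vp}} \in \mathcal{O}(2^{\card{\aut_\varphi}})$, hence of size exponential in $\card{\varphi}$, using Lemma~\ref{lem:aja-to-nvpa}. Since $\varphi$ is satisfiable if, and only if, $L(\varphi) \neq \emptyset$, which holds if, and only if, $L(\aut_\mathit{vp}) \neq \emptyset$, it remains to decide non-emptiness of a \bnvpa. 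As this is decidable in polynomial time in the size of the automaton, the whole procedure runs in exponential time in $\card{\varphi}$, placing satisfiability in \exptime. For validity I would use that $\varphi$ is valid if, and only if, $\neg\varphi$ is unsatisfiable; since \vldl has negation as a primitive and closing under it increases the formula size only by a constant, and since \exptime is closed under complementation, validity is in \exptime as well.

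For hardness I would reduce from the word problem for polynomially space-bounded alternating Turing machines, which is \exptime-complete because alternating polynomial space coincides with deterministic exponential time, adapting the reduction of Bouajjani et al.~\cite{bouajjani97}. Given such a machine $M$ and an input $x$ of length $n$, the configurations of $M$ have length polynomial in $n$, and I would encode a branch through the computation tree of $M$ on $x$ as an infinite word over a suitable pushdown alphabet: local actions spell out successive configurations, while calls and returns delimit the branching forced by the alternation, so that the stack records the position within the computation tree. The formula $\varphi_{M,x}$ would then assert, via its \tnvpa-guarded operators, that (i) the word is a well-formed sequence of configurations starting from the initial configuration on $x$, (ii) consecutive configurations are consistent with the transition relation of $M$, correctly reflecting existential and universal branching, and (iii) every branch reaches an accepting configuration. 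The guards let me compare a configuration read before a call with the matching configuration read after the corresponding return, which is exactly the matched comparison that visibly pushdown languages provide. I would arrange the construction so that $\varphi_{M,x}$ is satisfiable if, and only if, $M$ accepts $x$, which yields \exptime-hardness of satisfiability; \exptime-hardness of validity then follows, again using that \exptime is closed under complementation, by considering $\neg\varphi_{M,x}$, which is valid precisely when $M$ rejects $x$.

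The \emph{main obstacle} is the hardness reduction, specifically designing the guard automata so that transition consistency between successive configurations can be verified even though the configurations are interleaved with the stack operations that encode the alternation, and even though a \vpa may not inspect the top of its stack except when reading a return. Making the visibly pushdown bracketing faithfully mirror the universal and existential branching of $M$, while keeping every guard expressible as a \tnvpa of size polynomial in $n$, is the delicate part. The membership direction, by contrast, is a routine composition of the translations already established together with the polynomial-time emptiness test for \bnvpas.
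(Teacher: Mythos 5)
Your proposal matches the paper's proof in both halves: membership is obtained by translating $\varphi$ into a \oneaja via Lemma~\ref{lem:vldl-to-aja} and deciding emptiness in exponential time (the paper cites the \oneaja emptiness result directly, which internally goes through the same \bnvpa translation you spell out), and hardness is shown by the same reduction from the word problem for polynomially space-bounded alternating Turing machines, serializing the run tree so that the stack tracks the current path and matched calls/returns let the guards compare a configuration with its reoccurrence after the corresponding subtree. The encoding details you flag as the delicate part are exactly the ones the paper works out (tagged subtree delimiters and the push/pop duplication of each configuration), so the approach is essentially identical.
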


\begin{proof}
Due to duality, we only show \exptime-completeness of the satisfiability problem.
Membership follows from the \oneaja-emptiness-problem being in \exptime \cite{bozzelli07} and Lemma~\ref{lem:vldl-to-aja}.

It remains to show~$\exptime$-hardness, which we prove by prociding a reduction from the word problem for polynomially space-bounded alternating Turing machines.
This problem asks whether a given word is accepted by a given alternating Turing machine.
Since a run of an alternating Turing machine is a finite tree, it can be serialized as a word, where the subtrees are delimited by special symbols.
Such a word can then be checked for correctly encoding some tree using the stack of a VPA.
Adherence to the transition relation, as well as the property that the tree describes an accepting run of the Turing machine can be checked mostly locally without using the stack.
These constraints can be expressed in \vldl, such that their conjunction is satisfiable  if, and only if, there exists an accepting run of the Turing machine on the word, i.e., if the Turing machine accepts the word.

An alternating Turing machine (ATM) \cite{chandra76} $\tm = (Q_\exists, Q_\forall, \Gamma, q_I, \Delta, F)$ consists of
	two finite disjoint sets $Q_\exists$ and $Q_\forall$ of states, which are called existential and universal states, respectively, for which we write $Q \coloneq Q_\exists \cup Q_\forall$,
	a tape alphabet $\Gamma$ containing a blank symbol~$B$,
	an initial state $q_I \in Q \setminus F$,
	a transition relation~$\Delta\subseteq Q \times \Gamma\times Q \times \Gamma \times \set{L, R}$,
	and a set of final states $F \subseteq Q$.

Let $p(n)$ be some polynomial.
A configuration~$c$ of a $p(n)$-bounded ATM~$\tm$ on an input word~$w$ is a word of length $p(\card{w})+1$ over the alphabet $\Gamma \cup Q $ that contains exactly one symbol from $Q$. Let $\conf \coloneq \Gamma^* Q \Gamma^* \cap (Q \cup \Gamma)^{p(\card{w})+1}$ denote the set of such configurations. 
If $c \in \conf$ contains a symbol from $Q_\exists$ ($Q_\forall$), we call $c$ existential (universal). Furthermore, a transition~$(q, a, q', a', D) \in \Delta$ with $D \in \set{L,R}$ is existential (universal), if $q \in Q_\exists$ ($q \in Q_\forall$). 
We assume w.l.o.g.\ that every configuration has exactly two applicable transitions and that the initial state is not final.

A run of a $p(n)$-bounded ATM~$\tm$ on $w$ is a finite tree that is labeled with configurations of $\tm$ on $w$.
Each non-terminal vertex has either one or two successors, depending on whether it is labeled with an existential or a universal configuration. These successors have to be labeled by one or two successor configurations.
A run is accepting if all terminal vertices are labeled with final configurations.
An ATM $\tm$ accepts a word $w$ if there exists an accepting run of~$\tm$ on $w$.

An instance of the word problem consists of a $p(n)$-space-bounded ATM $\tm$ and a word $w$ and asks whether or not $\tm$ accepts $w$.
This problem is \exptime-hard~\cite{chandra76}.
	
\newcommand{\lefthash}{<}
\newcommand{\righthash}{>}
\newcommand{\pushconf}[1]{\mathit{push}(#1)}
\newcommand{\popconf}[1]{\mathit{pop}(#1)}
\newcommand{\enc}[1]{\mathit{enc}(#1)}
\newcommand{\hashes}[0]{\mathit{Tags}}

We encode runs of $\tm$ by linearizing them as words using tags of the form $\lefthash_\tau^i$ and $\righthash_\tau^i$ for $i \in \set{1,2}$ to delimit the encoding of the first and second subtree of a vertex (recall that we assume that every configuration has at most two successors).
Here, $\tau$ denotes the transition that is applied to obtain the configuration of the root of this subtree.
Moreover, we use the tags $\lefthash_\ell$ and $\righthash_\ell$ to denote leaves.

Formally, we define the pushdown alphabet~$\vpdalphabet = (\calls, \returns, \locals )$ with
\begin{itemize}
	\item $\calls = ((Q \cup \Gamma ) \times \set{\push\thinspace}) \cup \set{\lefthash_{\tau}^1 \mid \tau \in \Delta} \cup \set{\lefthash_\ell}$,

	\item $\returns = ((Q \cup \Gamma ) \times \set{\pop\thinspace}) \cup \set{\righthash_\tau^1 \mid \tau \text{ existential}} \cup \set{\righthash_\tau^2 \mid \tau \text{ universal}} \cup \set{\righthash_\ell}$, and
	
	\item $\locals = \set{\righthash_\tau^1, \lefthash_\tau^2 \mid \tau \text{ universal}} \cup \set{\#}$.
\end{itemize}
Let $\hashes = \set{\lefthash_\tau^1, \righthash_\tau^1 \mid \tau \text{ existential}} \cup \set{\lefthash_\tau^1, \righthash_\tau^1, \lefthash_\tau^2, \righthash_\tau^2 \mid \tau \text{ universal}} \cup \set{\lefthash_\ell, \righthash_\ell}$.

For $w = w_0\cdots w_n \in \conf^*$ and $d \in \set{\push\thinspace, \pop\thinspace}$, let $(w, d) \coloneq (w_0, d)\cdots(w_n, d)$, which we lift to languages in the obvious way.
Furthermore, let $w^r \coloneq w_n \cdots w_0$.
Let $c \in \conf$. We define $\pushconf{c} \coloneq (c, \push\thinspace)$ and $\popconf{c} \coloneq (c^r, \pop\thinspace)$

Using this, we encode a run of $\tm$ by recursively iterating over its vertices~$v$ as follows:
\begin{itemize}
	\item $\enc{v}\coloneq
			\lefthash_\ell \cdot \pushconf{c} \cdot 
			\righthash_\ell \cdot \popconf{c}$, if $v$ is a leaf labeled with the configuration~$c$.

	\item $\enc{v} \coloneq
		\lefthash_\tau^1 \cdot \pushconf{c} \cdot
		\enc{v_1} \cdot
		\righthash_\tau^1 \cdot \popconf{c}$,
			if $v$ has a single child~$v_1$, $v$ is labeled by the (existential) configuration~$c$, and $\tau$ is the transition that is applied to $c$ to obtain the label of $v_1$.  

	\item $\enc{v} \coloneq
		\lefthash_{\tau_1}^1 \cdot \pushconf{c} \cdot
		\enc{v_1} \cdot\
		\righthash_{\tau_1}^1 \cdot \popconf{c}
		\lefthash_{\tau_2}^2 \cdot \pushconf{c} \cdot
		\enc{v_2} \cdot
		\righthash_{\tau_2}^2 \cdot \popconf{c}
		$,
			if $v$ has two children~$v_1$ and $v_2$, $v$ is labeled by the (universal) configuration~$c$, and $\tau_i$, for $i \in \set{1,2}$, is the transition that is applied to $c$ to obtain the label of $v_i$.  
\end{itemize}
Thus, a complete run with root~$v$ is encoded by $\enc{v}\cdot \#^\omega$. Our goal is to construct a formula that is satisfied only by words that encode initial accepting runs of $\tm$ on $w$.
To this end, we need to formalize the following six conditions on an infinite word~$\alpha \in \Sigma^\omega$:

\begin{enumerate}
	\item $\alpha \in (\hashes \cdot \conf)^+ \cdot \#^\omega$ and begins with $\lefthash_\tau^1 \cdot\ (c_I, \push\thinspace)$, where $c_I$ is the initial configuration of $\tm$ on $w$ and where $\tau$ is a transition that is applicable to~$c_I$.
	
		\item Every $\lefthash_\tau^i$, $i \in \set{1,2}$, is directly followed by $(c, \push\thinspace)$ for some configuration~$c$ to which $\tau$ is applicable. Furthermore, say the stack height is $n$ after this infix. Then, we require that this stack height is reached again at a later position, and at the first such position, the infix~$\righthash_\tau^1 \cdot (c^r, \pop\thinspace)$ starts.

	\item Every $\righthash_\tau^1$ with universal~$\tau$, which is directly followed by  $(c^r, \pop\thinspace)$ for some configuration~$c$ (assuming the previous condition is satisfied), is directly followed by $(c^r, \pop\thinspace) \cdot \lefthash_{\tau'}^2 \cdot (c, \push\thinspace)$, where $\tau' \not= \tau$ is the unique other transition that is applicable to $c$.
	
	\item Every $\lefthash_\tau^i$, $i \in \set{1,2}$, is directly followed by $(c, \push\thinspace) < (c', \push\thinspace)$ for some $< \in \set{\lefthash_\tau^1 \mid \tau \in \Delta} \cup \set{\lefthash_\ell }$ such that $\tau$ is applicable to $c$ and $c'$ is the corresponding successor configuration.

	\item Every $\lefthash_\ell$ is directly followed by $(c, \push\thinspace) \righthash_\ell (c^r, \pop\thinspace)$ for some accepting configuration of $\tm$.

	\item Stack height zero has to be reached after a non-empty prefix, and from the first such position onwards, only $\#$ appears. 
\end{enumerate}

It is straightforward to come up with polynomially-sized \vldl formulas expressing these conditions (note that only the second and sixth condition require non-trivial usage of the stack). Furthermore, $\alpha$ satisfies the conjunction of these properties if, and only if, it encodes an accepting run of $\tm$ on $w$. Thus, as the word problem for polynomially space-bounded ATMs is \exptime-hard, the satisfiability problem for \vldl is \exptime-hard as well.
\end{proof}

\section{Model Checking is \exptime-complete}
\label{sec:model-checking}

We now consider the model checking problem for \vldl.
An instance of the model checking problem consists of a \vps $\vpsys$, an initial state $q_I$ of $\vpsys$, and a \vldl formula $\varphi$ and asks whether $\traces(\vpsys, q_I) \subseteq L(\varphi)$ holds true, where $\traces(\vpsys, q_I)$ denotes the set obtained by mapping each run of $\vpsys$ starting in~$q_I$ to the sequence of labels of the traversed edges.
This problem is decidable in exponential time due to Lemma~\ref{lem:vldl-to-aja} and an exponential-time model checking algorithm for \oneajas \cite{bozzelli07}.
Moreover, the problem is \exptime-hard, as it subsumes the validity problem.

\begin{theorem}
\label{thm:model-checking-hardness}
Model checking \vldl specifications against \vpss is \exptime-complete.
\end{theorem}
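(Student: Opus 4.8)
The plan is to establish the two directions of \exptime-completeness separately, reusing the machinery developed earlier in the paper.

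For membership in \exptime, I would translate the given \vldl formula $\varphi$ into an equivalent \oneaja $\aut_\varphi$ of size $\mathcal{O}(\card{\varphi}^2)$ using Lemma~\ref{lem:vldl-to-aja}, so that $L(\aut_\varphi) = L(\varphi)$. The model-checking instance $\traces(\vpsys, q_I) \subseteq L(\varphi)$ then becomes $\traces(\vpsys, q_I) \subseteq L(\aut_\varphi)$, i.e., every initial run of $\vpsys$ from $q_I$ produces a label sequence accepted by $\aut_\varphi$. Since this translation is polynomial and there is an \exptime model-checking algorithm for \oneajas against \vpss~\cite{bozzelli07}, the composition runs in \exptime. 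The only point to verify here is that the relevant synchronization of $\vpsys$ and $\aut_\varphi$ stays within the class handled by Bozzelli, which holds because both share the pushdown alphabet $\vpdalphabet$ and the product respects the call/return/local partition.

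For \exptime-hardness, I would give a trivial many-one reduction from the validity problem, which is \exptime-hard by Theorem~\ref{thm:satisfiability-exptime-complete}. A formula $\varphi$ is valid if, and only if, $L(\varphi) = (2^P)^\omega$. I construct a single-state universal \vps $\vpsys_\mathit{univ} = (\set{q_I}, \vpdalphabet, \set{\bot, A}, \Delta)$ whose transition relation contains $(q_I, c, q_I, A)$ for every $c \in \calls$, both $(q_I, r, A, q_I)$ and $(q_I, r, \bot, q_I)$ for every $r \in \returns$, and $(q_I, l, q_I)$ for every $l \in \locals$. This \vps can read every letter of $\Sigma = 2^P$ in every configuration; in particular it can take a return both with $A$ and with the bottom marker on top, so it has an initial run on every infinite word and hence $\traces(\vpsys_\mathit{univ}, q_I) = (2^P)^\omega$. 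Consequently $\varphi$ is valid if, and only if, $\traces(\vpsys_\mathit{univ}, q_I) \subseteq L(\varphi)$ holds. Since $\vpsys_\mathit{univ}$ has size linear in $\card{\Sigma}$ and is independent of $\varphi$, this reduction is computable in polynomial time, transferring \exptime-hardness from validity to model checking.

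I expect neither step to pose serious difficulty, as both rely directly on results already established; the one point deserving care is the correctness of $\vpsys_\mathit{univ}$, namely verifying that the stack-bottom return transitions indeed allow a run to continue at stack height zero, so that genuinely every word over $2^P$ is a trace regardless of how calls and returns are balanced.
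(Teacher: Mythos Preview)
Your proposal is correct and follows essentially the same approach as the paper: membership via Lemma~\ref{lem:vldl-to-aja} combined with Bozzelli's \exptime model-checking for \oneajas against \vpss, and hardness via a reduction from validity using a universal \vps whose traces are all of $\Sigma^\omega$. The paper is terser about $\vpsys_\mathit{univ}$ (it merely asserts that such a system is constructible in constant time), whereas you spell out the single-state construction and correctly identify the need for a bottom-of-stack return transition; this extra care is appropriate and does not diverge from the paper's argument.
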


\begin{proof}
	Membership in \exptime follows from Lemma \ref{lem:vldl-to-aja} and the membership of the problem of checking visibly pushdown systems against \oneaja specifications in \exptime \cite{bozzelli07}.
	Moreover, since the validity problem for \vldl is \exptime-hard and since validity of $\varphi$ is equivalent to $\traces(\vpsys_\mathit{univ}) \subseteq \varphi$, where $\vpsys_\mathit{univ}$ with $\traces(\vpsys_\mathit{univ}) = \Sigma^\omega$ is effectively constructible in constant time, the model checking problem for \vldl is \exptime-hard as well.
\end{proof}

\section{Solving \vldl Games is \threeexp-complete}
\label{sec:realizability}

In this section we investigate visibly pushdown games with winning conditions given by \vldl formulas.
We consider games with two players, called Player~$0$ and Player~$1$, respectively.

A two-player game with \vldl winning condition $\game = (V_0, V_1, \Sigma, E, v_I, \ell, \varphi)$ consists of
two disjoint, at most countably infinite sets $V_0$ and $V_1$ of vertices, where we define $V \coloneq V_0 \cup V_1$,
a finite alphabet $\Sigma$,
a set of edges $E \subseteq V \times V$,
an initial state $v_I \in V$,
a labeling $\ell\colon V \rightarrow \Sigma$,
and a \vldl formula $\varphi$ over some partition of $\Sigma$,
called the winning condition.
	
A play $\pi = v_0v_1v_2\cdots$ of $\game$ is an infinite sequence of vertices of $\game$ with $(v_i, v_{i+1}) \in E$ for all $i \geq 0$.
The play $\pi$ is initial if $v_0 = v_I$.
It is winning for Player~$0$ if $\ell(v_1)\ell(v_2)\ell(v_3)\cdots$~\footnote{Note that the sequence of labels trace omits the label of the first vertex for technical reasons.} is a model of $\varphi$.
Otherwise $\pi$ is winning for Player~$1$.

A strategy for Player~$i$ is a function $\sigma\colon V^*V_i \rightarrow V$, such that $ (v, \sigma(w \cdot v)) \in E$ for all $v \in V_i$, $w \in V^*$.
We call a play $\pi = v_0v_1v_2\cdots$ consistent with $\sigma$ if $\sigma(\pi') = v_{n+1}$ for all finite prefixes $\pi' = v_0\cdots v_n$ of $\pi$ where $v_n \in V_i$.
A strategy~$\sigma$ is winning for Player~$i$ if all initial plays that are consistent with $\sigma$ are winning for that player.
We say that Player~$i$ wins $\game$ if she has a winning strategy.
If either player wins $\game$, we say that $\game$ is determined.

A visibly pushdown game (VPG) with a \vldl winning condition $\pdgame = (\vpsys, Q_0, Q_1, q_I, \varphi)$ consists of a VPS $\vpsys = (Q, \vpdalphabet, \Gamma, \Delta)$, a partition of $Q$ into~$Q_0$ and~$Q_1$, an initial state $q_I \in Q$, and a \vldl formula $\varphi$ over $\vpdalphabet$.
The VPG~$\pdgame$ then defines the two-player game
	$\game_\pdgame = (
		V_0,
		V_1,
		\Sigma,
		E,
		v_I,
		\ell,
		\varphi)$
	with $V_i \coloneq Q_i \times ((\Gamma \setminus \set{\bot})^*\cdot\bot) \times \Sigma$, $v_I = (q_I, \bot, a)$ for some $a \in \Sigma$ (recall that the trace disregards the label of the initial vertex), $((q, \gamma, a), (q', \gamma', a')) \in E$ if there is an $a'$-labeled edge from~$(q, \gamma)$ to $(q, \gamma')$ in the configuration graph $\graph_\vpsys$, and $\ell\colon (q, \gamma, a) \mapsto a$.
Solving a VPG $\pdgame$ means deciding whether Player~$0$ wins $\game_\pdgame$.

\begin{proposition}
\label{prop:vpg-determinacy}
VPGs with VLDL winning conditions are determined.
\end{proposition}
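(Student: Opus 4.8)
The plan is to derive determinacy from Martin's Borel Determinacy Theorem, using the fact that the winning condition is topologically simple. By Theorem~\ref{thm:vldl-eq-vpa}, $L(\varphi)$ is an \omegavpl, and by Lemma~\ref{lem:nvpa-to-stair} it is recognized by a deterministic parity stair automaton. Since such an automaton has a unique run on every input and evaluates a parity condition over the subsequence of steps, the set $L(\varphi)$ lies low in the Borel hierarchy: it is a Boolean combination of sets of the form ``color $c$ is seen infinitely often at a step'', each of which is $\Sigma^0_2$. In particular, $L(\varphi)$ is Borel, and this requires no separate argument about the topological complexity of \omegavpl beyond the determinization already established.

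Next I would cast $\game_\pdgame$ in a form to which Borel determinacy applies. The play space is the set of infinite paths through the (countable) configuration arena, viewed as a subspace of $V^\omega$ with the product topology over the discrete set $V$; the set of valid initial plays is closed. The map $\mu$ sending a play $\pi = v_0 v_1 v_2 \cdots$ to its trace $\ell(v_1)\ell(v_2)\cdots \in \Sigma^\omega$ is continuous, since any finite prefix of the trace is determined by a finite prefix of the play. Hence the set $W$ of plays winning for Player~$0$, namely $\mu^{-1}(L(\varphi))$ intersected with the closed set of valid initial plays, is Borel as the preimage of a Borel set under a continuous map, intersected with a closed set.

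Finally, I would invoke Martin's Borel Determinacy Theorem, which asserts that every two-player perfect-information game with a Borel winning set is determined, including games played on countable arenas. Applying it to $\game_\pdgame$ with the Borel winning set $W$ yields that one of the two players has a winning strategy, which is exactly the assertion that $\game_\pdgame$, and hence the VPG $\pdgame$, is determined.

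The main obstacle is bookkeeping rather than mathematics: one must phrase the graph game so that the hypotheses of Martin's theorem are literally met, either via the standard encoding of a graph game as a Gale--Stewart game or by appealing to the form of the theorem stated directly for games on graphs, and one must check that the label-shift $\mu$ together with the restriction to valid initial plays preserves the Borel class. The only genuinely mathematical point, that $L(\varphi)$ is Borel, follows immediately from the determinization into a parity stair automaton established earlier, so the argument reduces to assembling these ingredients.
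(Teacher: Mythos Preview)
Your argument is correct, but it takes a different route from the paper. The paper's proof is a two-line citation: by Theorem~\ref{thm:vldl-eq-vpa} every \vldl winning condition defines an \omegavpl, and determinacy of VPGs with \omegavpl winning conditions is already established by L\"oding et al.~\cite{loeding04}. You instead unpack what lies behind such a result, showing directly that $L(\varphi)$ is Borel (via the DPSA of Lemma~\ref{lem:nvpa-to-stair}) and then invoking Martin's Borel Determinacy Theorem on the induced Gale--Stewart game. Your approach is more self-contained and makes explicit \emph{why} determinacy holds; the paper's is more economical given that the cited result is available. One minor slip that does not affect the conclusion: the sets ``color $c$ is seen infinitely often at a step'' are not $\Sigma^0_2$---``infinitely often'' alone is already $\Pi^0_2$, and the predicate ``$k$ is a step'' is itself $\Pi^0_1$, so the resulting sets sit a little higher in the hierarchy; but they are certainly Borel, which is all your argument needs.
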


\begin{proof}
Since each \vldl formula defines a language in \omegavpl due to Theorem~\ref{thm:vldl-eq-vpa}, each VPG with \vldl winning condition is equivalent to a VPG with an \omegavpl winning condition. These are known to be determined \cite{loeding04}.	
\end{proof}

We show that solving VPGs with winning conditions specified in \vldl is harder than solving VPGs with winning conditions specified by \bnvpas.
Indeed, the problem is \threeexp-complete.
\begin{theorem}
\label{thm:vpg-completeness}
Solving VPGs with \vldl winning conditions is \threeexp-complete.	
\end{theorem}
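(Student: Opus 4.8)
The plan is to prove \threeexp-completeness by establishing membership and hardness separately, both by reduction to/from the corresponding problem for VPGs with $\omega$-\vpa winning conditions, whose complexity is known from L\"oding et al.

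For membership in \threeexp, I would proceed as follows. Given a VPG $\pdgame = (\vpsys, Q_0, Q_1, q_I, \varphi)$ with \vldl winning condition $\varphi$, first apply Theorem~\ref{thm:vldl-eq-vpa} (via Lemmas~\ref{lem:vldl-to-aja} and~\ref{lem:aja-to-nvpa}) to translate $\varphi$ into an equivalent \bnvpa $\aut$ of size exponential in $\card{\varphi}$. This yields an equivalent VPG with an $\omega$-\vpa winning condition whose winning-condition automaton is exponentially larger than $\varphi$, while the game arena $\vpsys$ is unchanged. By the result of L\"oding et al.~\cite{loeding04}, solving a VPG against an $\omega$-\vpa winning condition is in \twoexp in the combined size of the game and the automaton. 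Composing a singly-exponential blowup with a doubly-exponential decision procedure gives an overall triply-exponential algorithm, placing the problem in \threeexp.

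For \threeexp-hardness, the strategy is to reduce from solving pushdown games against \ltl specifications, which L\"oding et al.~\cite{loeding04} show to be \threeexp-hard. I would take an instance of that problem, namely a pushdown game together with an \ltl formula $\psi$, and transform it into a VPG with \vldl winning condition. The \ltl formula is converted into an equivalent \vldl formula in linear time using Lemma~\ref{lem:ldl-to-vldl} together with the linear embedding of \ltl into \ldl; here one must fix a suitable pushdown partition of the alphabet (e.g.\ treating all letters as local actions for the winning condition, as in Lemma~\ref{lem:ldl-to-vldl}). The underlying pushdown game arena must be recast as a \vps so that the call/return/local structure of its stack operations is made visible and compatible with the fixed pushdown alphabet. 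Since the translation of the winning condition is only linear, the resulting VPG has polynomial size in the original instance, so the reduction is polynomial and transfers the \threeexp lower bound.

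The main obstacle I anticipate is the hardness direction, specifically reconciling the stack discipline of an arbitrary pushdown game with the visibility restriction required by a \vps and by the fixed pushdown partition underlying the \vldl formula. One must ensure that the reduction genuinely preserves which player wins: the alphabet partition chosen for the winning condition (all-local) must be consistent with the partition that governs the stack operations in the arena, and the technical convention that the trace omits the label of the initial vertex must be handled correctly. Provided the cited \threeexp-hardness result for \ltl pushdown games already produces games whose structure embeds into the visibly pushdown setting, this bookkeeping is routine; the crucial point is merely that Lemma~\ref{lem:ldl-to-vldl} keeps the winning-condition blowup linear, so that no exponent is lost and the hardness is inherited directly.
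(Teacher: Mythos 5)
Your membership argument is exactly the paper's: translate $\varphi$ into a \bnvpa of exponential size via Lemmas~\ref{lem:vldl-to-aja} and~\ref{lem:aja-to-nvpa}, then invoke the \twoexp algorithm of L\"oding et al.\ for VPGs with \bnvpa winning conditions. That direction is fine.

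The hardness direction has a genuine gap. You correctly identify the reduction source (pushdown games with \ltl winning conditions) and correctly observe that the obstacle is reconciling an arbitrary pushdown arena with the visibility restriction --- but you then assume this obstacle away (``provided the cited result already produces games whose structure embeds into the visibly pushdown setting, this bookkeeping is routine''). It does not embed: a general pushdown game has transitions of the form $(q, A, a, q', w)$ with $w \in \Gamma^{\leq 2}$, which inspect and rewrite the top of the stack, whereas in a \vps the stack operation is dictated by the type of the input letter. The paper's construction resolves this by introducing fresh dummy letters and splitting every pushdown transition into exactly three VPG transitions (a pop of $A$ followed by up to two pushes or local no-ops), over the partition $\vpdalphabet = (\set{c}, \Sigma, \set{l})$ in which the original letters become returns and the new letters $c$ and $l$ are the sole call and local action. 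Your proposed all-local partition would make the arena's \vps unable to perform any stack operation at all, so the simulation of the pushdown game would break. Moreover, the three-fold stuttering of each move forces a corresponding rewriting of the winning condition before it is converted to \vldl: each $\ltlnext\psi$ must become $\ltlnext^3\psi'$ and each $\psi_1\ltluntil\psi_2$ must become $(\psi'_1 \lor c \lor l)\ltluntil(\psi'_2 \land \neg c \land \neg l)$, so that the formula is evaluated only at positions corresponding to original game steps. Without both the transition-splitting gadget and this formula rewriting, the reduction does not preserve the winner, so the lower bound is not established by your argument as written.
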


\begin{proof}
We solve VPGs with \vldl winning conditions by first constructing a \bnvpa~$\aut_\varphi$ of exponential size from the winning condition $\varphi$ and by then solving the resulting visibly pushdown game with a \bnvpa winning condition~\cite{loeding04}.
As VPGs with \bnvpa winning conditions can be solved in doubly-exponential time, this approach takes triply-exponential time in~$\card{\varphi}$ and exponential time in~$\card{\vpsys}$.

We show \threeexp-hardness of the problem by a reduction from solving pushdown games with \ltl winning conditions, which is known to be \threeexp-complete~\cite{loeding04}.
A pushdown game with an \ltl winning condition $\pdgame = (\mathcal S, V_I, V_O, \psi)$ is defined similarly to a VPG, except for the relaxation that $\mathcal S$ may now be a traditional pushdown system instead of a visibly pushdown system.
Specifically, we have $\Delta \subseteq (Q \times \Gamma \times \Sigma \times Q \times \Gamma^{\leq 2})$, where $\Gamma^{\leq 2}$ denotes the set of all words over $\Gamma$ of at most two letters.
Stack symbols are popped off the stack using transitions of the form $(q, A, a, q', \epsilon)$, the top of the stack can be tested and changed with transitions of the form $(q, A, a, q', B)$, and pushes are realized with transitions of the form $(q, A, a, q', BC)$.
Additionally, the winning condition is given as an \ltl formula instead of a \vldl formula.
The two-player game $\game_\pdgame$ is defined analogously to the visibly pushdown game.

Since the pushdown game admits transitions such as $(q, A, a, q', BC)$, which pop $A$ off the stack and push $B$ and $C$ onto it, we need to split such transitions into several transitions in the visibly pushdown game.
We modify the original game such that every transition of the original game is modeled by three transitions in the visibly pushdown game, up to two of which may be dummy actions that do not change the stack.
As each transition may perform at most three operations on the stack, we can keep track of the list of changes still to be performed in the state space.
We perform these actions using dummy letters $c$ and~$l$, which we add to $\Sigma$ and read while performing the required actions on the stack.
We choose the vertices $V'_X = V_X \cup (V_X \times (\Gamma \cup \set{\#})^{\leq 2})$ and the alphabet $\vpdalphabet = (\set{c}, \Sigma, \set{l})$.

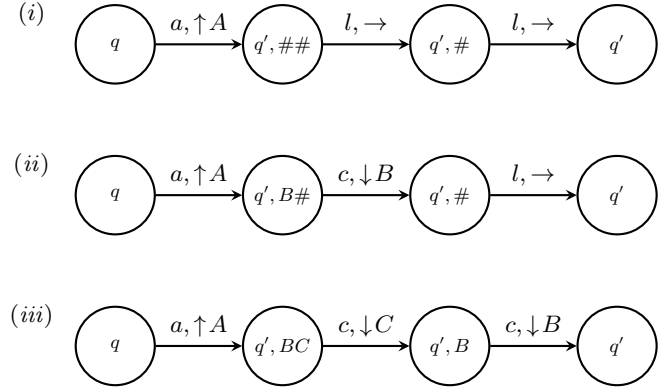
\begin{figure}[]
\begin{center}
\begin{tikzpicture}[xscale=2.2,yscale=2, thick]

\tikzstyle{p1}=[draw,circle,minimum size=1.3cm,scale=.8]
\tikzstyle{p2}=[draw,circle,minimum size=1.3cm,scale=.8]
\tikzstyle{transition}=[draw,-stealth]

	\node (pop-label) at (-.5,.2) {$(i)$};
	\node[p1] (pop-1) at (0,0) {$q$};
	\node[p1] (pop-2) at (1,0) {$q', \#\#$};
	\node[p1] (pop-3) at (2,0) {$q', \#$};
	\node[p2] (pop-4) at (3,0) {$q'$};
	
	\path[transition] (pop-1) edge node[anchor=south] {$a, \pop A$} (pop-2);
	\path[transition] (pop-2) edge node[anchor=south] {$l, \local$} (pop-3);
	\path[transition] (pop-3) edge node[anchor=south] {$l, \local$} (pop-4);
	
	\node (test-label) at (-.5,-.8) {$(\mathit{ii})$};
	\node[p1] (test-1) at (0,-1) {$q$};
	\node[p1] (test-2) at (1,-1) {$q', B\#$};
	\node[p1] (test-3) at (2,-1) {$q', \#$};
	\node[p2] (test-4) at (3,-1) {$q'$};
	
	\path[transition] (test-1) edge node[anchor=south] {$a, \pop A$} (test-2);
	\path[transition] (test-2) edge node[anchor=south] {$c, \push B$} (test-3);
	\path[transition] (test-3) edge node[anchor=south] {$l, \local$} (test-4);
	
	\node (pop-label) at (-.5,-1.8) {$(\mathit{iii})$};
	\node[p1] (push-1) at (0,-2) {$q$};
	\node[p1] (push-2) at (1,-2) {$q', BC$};
	\node[p1] (push-3) at (2,-2) {$q', B$};
	\node[p2] (push-4) at (3,-2) {$q'$};
	
	\path[transition] (push-1) edge node[anchor=south] {$a, \pop A$} (push-2);
	\path[transition] (push-2) edge node[anchor=south] {$c, \push C$} (push-3);
	\path[transition] (push-3) edge node[anchor=south] {$c, \push B$} (push-4);
	
\end{tikzpicture}	
\end{center}
	\caption{Construction of a VPG from a pushdown game for transitions of the forms $(i)$ $(q, a, A, q', \epsilon)$,
	$(\mathit{ii})$ $(q, a, A, q', B)$, and
	$(\mathit{iii})$ $(q, a, A, q', BC)$.}
	\label{fig:vpg-hardness}
\end{figure}

We transform $\pdgame$ as shown in Figure~\ref{fig:vpg-hardness} and obtain the VPG $\pdgame'$.
Moreover, we transform the winning condition~$\psi$ of $\pdgame$ into $\psi'$ by inductively replacing each occurrence of $\ltlnext\psi$ by $\ltlnext^3\psi'$ and each occurrence of $\psi_1\ltluntil\psi_2$ by $(\psi'_1 \lor c \lor l)\ltluntil (\psi'_2 \land \neg c \land \neg l)$.
We subsequently translate the resulting \ltl formula $\psi'$ into an equivalent \vldl formula $\varphi$ using Lemma~\ref{lem:ldl-to-vldl}.
The input player wins $\pdgame'$ with the winning condition $\varphi$ if and only he wins $\pdgame$ with the winning condition $\psi$.
Hence, solving VPGs with \vldl winning conditions is \threeexp-hard.	
\end{proof}

Moreover, Löding et al. have shown that in a visibly pushdown game with a winning condition given by a \bnvpa, Player~$0$, in general, requires infinite memory in order to win~\cite{loeding04}.
Thus, there is no winning strategy for her that is implemented by a finite automaton with output.
Such automata which are sufficient, e.g., for omega-regular games on finite graphs.
As we can translate \bnvpas into \vldl formulas, we obtain the same lower bound for VPGs with \vldl winning conditions.

Moreover, for each VPG~$\game$ with winning condition~$\varphi$, we can easily construct the game~$\game'$ by exchanging the states of Player~$0$ and Player~$1$ and obtain that Player~$0$ wins~$\game$ with winning condition~$\varphi$ if, and only if, she loses~$\game'$ with winning condition~$\neg\varphi$.
Hence, Player~$1$ requires, in general, infinite memory as well in order to win a VPG with \vldl winning condition.

\begin{corollary}
	There exists a VPG~$\game$ with \vldl winning condition such that Player~$0$ wins~$\game$, but requires infinite memory to do so.
	Similarly, there exists a VPG~$\game'$ with \vldl winning condition such that Player~$1$ win~$\game'$, but requires infinite memory to do so.
\end{corollary}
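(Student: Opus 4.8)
The plan is to lift both lower bounds from the analogous result of Löding et al.\ for \bnvpa winning conditions via the expressive equivalence of \vldl and \bnvpas established in Theorem~\ref{thm:vldl-eq-vpa}, and then to obtain the statement for Player~$1$ by a role-swapping dualization. The guiding principle throughout is that the memory required by a player is an invariant of the game graph together with the \emph{set} of plays that are winning for her, and not of the syntactic presentation of the winning condition.

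For the first part I would start from the VPG~$\pdgame$ with a \bnvpa winning condition $L(\aut)$ for which Löding et al.\ show that Player~$0$ wins but has no finite-memory winning strategy. Applying the automata-to-logic direction of Theorem~\ref{thm:vldl-eq-vpa}, I obtain a \vldl formula~$\varphi$ over the same partition of $\Sigma$ with $L(\varphi) = L(\aut)$, and I replace the winning condition of $\pdgame$ by~$\varphi$, leaving the underlying \vps, the partition of the states into $Q_0$ and $Q_1$, the initial state, and the labeling untouched; call the resulting VPG~$\game$. Since the game graph is unchanged and $L(\varphi)=L(\aut)$, the induced sets of winning plays in $\game$ and in $\pdgame$ coincide, so strategies, consistent plays, and their memory are literally the same objects in both games. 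Hence Player~$0$ wins~$\game$ but still requires infinite memory, which proves the first claim.

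For the second part I would dualize $\game$. Construct $\game'$ by exchanging the ownership of the vertices, i.e.\ by swapping $Q_0$ and $Q_1$ in the underlying VPG while keeping the \vps, the initial state, and the labeling fixed, and equip it with the \vldl winning condition~$\neg\varphi$ (again a \vldl formula over the same partition, as \vldl is closed under negation). Because the game graph, the initial vertex, and the labeling are unchanged, every play of $\game$ is a play of $\game'$ with the same trace. As the vertices owned by Player~$1$ in $\game'$ are exactly those owned by Player~$0$ in $\game$, a strategy for Player~$1$ in $\game'$ has the same type $V^*V_0 \rightarrow V$ as a strategy for Player~$0$ in $\game$ and satisfies the identical edge constraint; moreover a play is winning for Player~$1$ in $\game'$ iff its trace fails $\neg\varphi$, i.e.\ iff it satisfies $\varphi$, which is precisely Player~$0$'s winning condition in $\game$. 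Thus the identity map is a bijection between Player~$0$'s strategies in $\game$ and Player~$1$'s strategies in $\game'$ that preserves being winning and preserves the amount of memory used. Consequently Player~$1$ wins~$\game'$ (consistency of this reasoning with Proposition~\ref{prop:vpg-determinacy} is immediate) but, just as Player~$0$ in $\game$, has no finite-memory winning strategy.

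The constructions themselves are routine; the step requiring care is the invariance of memory requirements under both transformations. I expect the main obstacle to be pinning down that a finite-memory winning strategy in one game would yield a finite-memory winning strategy in the other, so that the \emph{nonexistence} of such a strategy transfers. I would settle this by exhibiting the explicit strategy bijections described above---the identity of strategies in the language-equivalence step and the role-swap identity in the dualization step---and checking that each maps consistent plays to consistent plays and winning plays to winning plays, so that a Mealy/Moore implementation of one strategy directly implements the corresponding strategy in the other game.
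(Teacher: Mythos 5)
Your proposal is correct and follows essentially the same route as the paper: lift the infinite-memory lower bound of Löding et al.\ for \bnvpa winning conditions through the automata-to-logic direction of Theorem~\ref{thm:vldl-eq-vpa} on an unchanged game graph, then obtain the Player~$1$ statement by swapping vertex ownership and negating the formula. Your explicit strategy bijections merely make precise what the paper leaves implicit.
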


\section{Pushdown Linear Dynamic Logic}
\label{sec:pushdown-ldl}

In this work we extended \ldl by replacing the regular languages used as guards for the temporal operators by visibly pushdown languages.
We obtain an even stronger logic by using more expressive languages as guards, e.g., deterministic pushdown languages, which have deterministic pushdown automata (DPDA) as their canonical acceptors.
However, all relevant decision problems for the resulting logic called Deterministic Pushdown Linear Dynamic Logic (\dpldl) are undecidable, most importantly the satisfiability problem.

\begin{theorem}
\label{thm:dpldl-sat-undecidability}
	The satisfiability problem for \dpldl is undecidable.
\end{theorem}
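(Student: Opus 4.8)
The plan is to reduce from the emptiness problem for the intersection of two deterministic pushdown languages, which is classically undecidable: given DPDAs $A_1$ and $A_2$ over a common alphabet $\Sigma$, one cannot decide whether $L(A_1) \cap L(A_2) = \emptyset$. This follows from the standard linearization of Turing-machine computation histories as $C_0 \# C_1^R \# C_2 \# C_3^R \# \cdots$, with the configurations written in alternating orientation so that each legality check between neighbours is a single push-then-pop: one DPDA verifies that every even-indexed adjacent pair $(C_{2i}, C_{2i+1})$ is a legal machine step, while a second DPDA verifies the odd-indexed pairs analogously. The intersection of the two languages is then exactly the set of valid accepting computations, and it is nonempty iff the machine has an accepting run on its input; hence intersection emptiness for DPDAs is undecidable.

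Given such an instance $(A_1, A_2)$, I would fix a fresh symbol $\#$ not in $\Sigma$ and construct, for each $i \in \set{1,2}$, a DPDA $A_i'$ over $\Sigma \cup \set{\#}$ recognizing $\set{u\# \mid u \in L(A_i)}$. Concretely, $A_i'$ simulates $A_i$ on letters of $\Sigma$ and, upon reading $\#$ from an accepting configuration of $A_i$, moves to a fresh accepting state all of whose outgoing transitions lead to a rejecting sink; reading $\#$ from a non-accepting configuration also leads to the sink. Since $L(A_i) \subseteq \Sigma^*$, this keeps $A_i'$ deterministic and complete, and its accepting prefixes are exactly the $\#$-free words of $L(A_i)$ followed by a single $\#$. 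Encoding $\Sigma \cup \set{\#}$ injectively as subsets of a sufficiently large set $P$ of atomic propositions (no partition constraint is needed, as the guards of \dpldl are arbitrary DPDAs), I then take the \dpldl formula $\varphi \coloneq \ddiamond{A_1'}\ttrue \land \ddiamond{A_2'}\ttrue$.

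For correctness I would show $\varphi$ is satisfiable iff $L(A_1) \cap L(A_2) \neq \emptyset$. If $w \in L(A_1) \cap L(A_2)$, then $w\#^\omega$ satisfies each conjunct, since the prefix $w\#$ is accepted by both $A_1'$ and $A_2'$, so $\varphi$ is satisfiable. Conversely, if $\alpha \models \varphi$, each conjunct $\ddiamond{A_i'}\ttrue$ supplies a length $l_i$ with $\alpha_0\cdots\alpha_{l_i-1} \in L(A_i')$; by construction this prefix equals $u_i\#$ with $u_i \in L(A_i)$ and $u_i$ free of $\#$, so position $l_i - 1$ is exactly the first occurrence of $\#$ in $\alpha$, whence $l_1 = l_2$ and $u_1 = u_2 \in L(A_1) \cap L(A_2)$. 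The heart of the argument, and the \emph{main obstacle}, is this alignment of the two existential witnesses: the naive formula $\ddiamond{A_1}\ttrue \land \ddiamond{A_2}\ttrue$ fails, because it only asserts that two possibly distinct prefixes of $\alpha$ lie in $L(A_1)$ and $L(A_2)$ and does not exhibit a common word (e.g.\ for $L(A_1) = \set{a}$, $L(A_2) = \set{aa}$). The end-marker $\#$ forces both witnesses to the unique position following the first $\#$, collapsing them to a single shared finite word; the one point genuinely requiring care is verifying that each $A_i'$ admits no other accepting prefix, which is exactly where the $\#$-freeness of $L(A_i) \subseteq \Sigma^*$ is used.
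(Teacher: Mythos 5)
Your proposal is correct and takes essentially the same route as the paper: a reduction from nonemptiness of the intersection of two DPDAs, using a fresh end-marker $\#$ so that both existential witnesses are pinned to the first occurrence of $\#$ and hence coincide. The only (cosmetic) difference is that the paper checks the marker in the formula, writing $\ddiamond{\aut_1}\# \land \ddiamond{\aut_2}\#$, whereas you absorb the $\#$ into modified automata $A_i'$ and use $\ddiamond{A_i'}\ttrue$; the alignment argument is identical.
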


\begin{proof}
	We reduce the problem of deciding nonemptiness of the intersection of two DPDA, which is known to be undecidable \cite{hopcroft01}, to the satisfiability problem for \dpldl.
	Let $\aut_1$ and $\aut_2$ be two DPDA over a shared alphabet $\Sigma$, pick $\# \notin \Sigma$ and consider $\varphi \coloneq \ddiamond{\aut_1}\# \land \ddiamond{\aut_2}\#$.
	Then $\varphi$ is satisfiable  if, and only if, $L(\aut_1) \cap L(\aut_2) \neq \emptyset$.
	Hence satisfiability of \dpldl is undecidable.	
\end{proof}

As the satisfiability problem reduces to model checking and to solving pushdown games with \dpldl winning conditions, both problems are also undecidable.

\begin{corollary}
	The validity problem and the problem of checking \dpldl specifications against \vpss as well as the problem of solving pushdown games against \dpldl winning conditions are undecidable.
\end{corollary}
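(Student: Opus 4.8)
The plan is to derive all three undecidability results as consequences of Theorem~\ref{thm:dpldl-sat-undecidability} by exhibiting, for each problem, a reduction from the satisfiability problem for \dpldl (routing the model-checking case transitively through validity). The two facts I rely on throughout are that \dpldl is syntactically closed under negation, so $\neg\varphi$ is again a \dpldl formula, and that, exactly as for \vldl, satisfiability and validity are dual. I would treat the three problems in turn.

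For validity I would use pure duality. A \dpldl formula $\varphi$ is valid if, and only if, every word is a model, i.e.\ $L(\varphi) = \Sigma^\omega$, which holds if, and only if, $\neg\varphi$ has no model, i.e.\ $\neg\varphi$ is unsatisfiable. Hence the computable map $\psi \mapsto \neg\psi$ shows that any decision procedure for validity yields one for unsatisfiability, and thus for satisfiability. Since satisfiability is undecidable by Theorem~\ref{thm:dpldl-sat-undecidability}, validity is undecidable as well.

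For model checking I would reuse the universal-system trick from the proof of Theorem~\ref{thm:model-checking-hardness}: there is a \vps $\vpsys_\mathit{univ}$, constructible in constant time (for instance a single state with a local self-loop on every letter, over the purely-local pushdown alphabet $(\emptyset,\emptyset,2^P)$), with $\traces(\vpsys_\mathit{univ}) = \Sigma^\omega$. For this system $\traces(\vpsys_\mathit{univ}) \subseteq L(\varphi)$ holds exactly when $\varphi$ is valid, so validity reduces to model checking \dpldl specifications against \vpss, and the undecidability just established for validity transfers.

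For pushdown games I would reduce directly from satisfiability by building a degenerate, one-player arena. I would take a pushdown game whose entire vertex set belongs to Player~$0$ and whose moves let her emit, letter by letter, any sequence over $\Sigma$, so that the set of label sequences of its plays (discounting the omitted first label) is all of $\Sigma^\omega$; a finite graph suffices and is a special case of a pushdown game. Using $\varphi$ as the winning condition, a strategy for Player~$0$ fixes a single play, whence Player~$0$ has a winning strategy if, and only if, some word of $\Sigma^\omega$ lies in $L(\varphi)$, i.e.\ if, and only if, $\varphi$ is satisfiable. The only mild obstacle in the whole argument is getting precisely this collapse of \quot{winning strategy} to \quot{existence of a single satisfying play}, which requires giving Player~$0$ full control of the arena and ensuring every word is realizable; once that is arranged, undecidability follows from Theorem~\ref{thm:dpldl-sat-undecidability}.
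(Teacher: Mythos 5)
Your proposal is correct and matches the paper's (one-sentence) justification, which simply asserts that satisfiability reduces to each of the listed problems; you supply exactly the standard reductions being invoked — duality with negation for validity, the universal system $\vpsys_\mathit{univ}$ for model checking, and a one-player arena emitting arbitrary words for games. No gaps; the only difference is that you spell out details the paper leaves implicit.
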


Since every DPDA is also a PDA, the extension of \dpldl by nondeterministic pushdown automata inherits these undecidability results from \dpldl.
Thus, \vldl is, to the best of our knowledge, the most expressive logic that combines the temporal modalities of \ldl with guards specified by languages over finite words and still has decidable decision problems.

\section{Conclusion}
\label{sec:conclusion}

We have introduced Visibly Linear Dynamic Logic (\vldl) which strengthens Linear Dynamic Logic (\ldl) by replacing the regular languages used as guards in the latter logic with visibly pushdown languages.
\vldl characterizes the class of $\omega$-visibly pushdown languages.
We have provided effective translations from \vldl to \bnvpa and vice versa with an exponential blowup in size in both directions.
From automata to logic, this blowup cannot be avoided while it remains open whether or not it can be avoided in the other direction.

Figure~\ref{fig:constructions} gives an overview over the known formalisms that capture \omegavpl and the translations between them.
Our constructions are marked by solid lines, all others by dotted lines.
All constructions are annotated with the blowup they incur.

\begin{figure}[h]
\centering
\begin{tikzpicture}[xscale=3.5,yscale=2.25, thick]

\tikzstyle{construction}=[draw,-stealth]
\tikzstyle{self}=[solid]
\tikzstyle{others}=[dotted]

	\node[rectangle,rounded corners,draw,inner sep=4pt,outer sep=3pt] (caret) at (3,0) {\caret};
	\node[rectangle,rounded corners,draw,inner sep=4pt,outer sep=3pt] (vltl) at (2,0) {\vltl};
	\node[rectangle,rounded corners,draw,inner sep=4pt,outer sep=3pt] (bnvpa) at (1,0) {\bnvpa};
	\node[rectangle,rounded corners,draw,inner sep=4pt,outer sep=3pt] (dpsa) at (1,1) {DPSA};
	\node[rectangle,rounded corners,draw,inner sep=4pt,outer sep=3pt] (vldl) at (2,1) {\vldl};
	\node[rectangle,rounded corners,draw,inner sep=4pt,outer sep=3pt] (1-aja) at (0,1) {\oneaja};
	\node[rectangle,rounded corners,draw,inner sep=4pt,outer sep=3pt] (2-aja) at (0,0) {2-AJA};
	
	\path[construction,others] (caret) edge [transform canvas={xshift=0cm}] node [anchor=south,transform canvas={xshift=0cm}] {$\mathcal O(n)$ \cite{bozzelli14b}} (vltl);
	\path[construction,others] (vltl) edge [transform canvas={yshift=.1cm},bend right=10] node [anchor=south] {$\mathcal O(2^n)$ \cite{bozzelli14b}} (bnvpa);
	\path[construction,others] (bnvpa) edge [transform canvas={yshift=-.1cm},bend right=10] node [anchor=north] {$\mathcal O(2^n)$ \cite{bozzelli14b}} (vltl);
	\path[construction,others] (bnvpa) edge [] node [rotate=-90,transform canvas={xshift=.25cm}] {$\mathcal O(2^n)$ \cite{loeding04}} (dpsa);
	\path[construction,self] (dpsa) edge [transform canvas={xshift=0cm}] node [anchor=north] {$\mathcal O(n^2)$} (vldl);
	\path[construction,self] (vldl.north) to [out=135,in=35,looseness=0.1] node [anchor=south] {$\mathcal O(n^2)$} (1-aja.north);
	\path[construction,others] (1-aja) edge [transform canvas={xshift=0cm}] node [rotate=90,transform canvas={xshift=-.25cm}] {$\mathcal O(1)$ \cite{bozzelli07}} (2-aja);
	\path[construction,others] (2-aja) edge [transform canvas={xshift=0cm}] node [anchor=south] {$\mathcal O(2^n)$ \cite{bozzelli07}} (bnvpa);
	\path[construction,others] (bnvpa) edge [transform canvas={xshift=0cm}] node [anchor=south,rotate=-33.69] {$\mathcal O(n^2)$ \cite{bozzelli07}} (1-aja);
	
\end{tikzpicture}

	\caption{Formalisms capturing (subsets of) \omegavpl and translations between them.}
	\label{fig:constructions}
\end{figure}
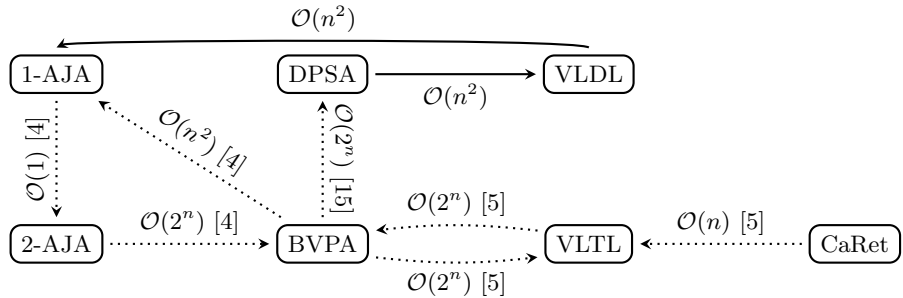

In particular, there exist translations between \vltl and \vldl via \bnvpas that incur a doubly-exponential blowup in both directions, as shown in Figure~\ref{fig:constructions}.
In spite of this blowup the satisfiability problem and the model checking problem for both logics are \exptime-complete.
It remains open whether there exist efficient translations between the two logics.

We showed the satisfiability and the emptiness problem for \vldl, as well as model checking visibly pushdown systems against \vldl specifications, to be \exptime-complete.
Also, we proved that solving visibly pushdown games with \vldl winning conditions is \threeexp-complete.

Extending \vldl by replacing the guards with a more expressive family of languages quickly yields undecidable decision problems.
In fact, using deterministic pushdown languages as guards already renders all decision problems discussed in this work undecidable.

In contrast to \ldl \cite{vardi11} and \vltl \cite{bozzelli14b}, \vldl uses automata to define guards instead of regular or visibly rational expressions.
We are currently investigating a variant of \vldl where the \vpas guarding the temporal operators are replaced by visibly rational expressions (with tests), which is closer in spirit to \ldl. 

Moreover, the algorithm for checking satisfiability of \vldl formulas presented in this work relies on a translation of these formulas into \oneaja.
While the emptiness problem for \oneaja is known to be solvable in exponential time, no implementation of an algorithm doing so exists.
Thus, we are working on an alternative translation of \oneaja into alternating tree automata that preserves emptiness.
The resulting tree automaton is of exponential size and its emptiness problem is easily reducible to a parity game~\cite{FijalkowPinchinatSerre13}.
Since parity games serve as a backend for solving a multitude of verification problems, they have received considerable attention and mature tool support~\cite{FriedmannLange09,Keiren09}.
Such a construction could then be used for an implementation of a satisfiability checker for \vldl.
Moreover, it would make the problem amenable to the methods of bounded synthesis~\cite{FinkbeinerSchewe13}, thus providing a feasible approach to solving the problem in spite of its prohibitive worst-case complexity.

%Furthermore, one could imagine a compromise between \vldl and \vltl, in which the temporal operators of \ldl are combined with the visibly rational expressions of \vltl.
%Since there exist effective translations of (pure) VREs into \vpas and vice versa, such a logic captures \omegavpl and thus has the same expressiveness as \vldl.
%However, these translations incur a quadratic blowup in size in the direction from VREs to \vpas, and a doubly-exponential blowup in the other direction.
%It is unknown whether there exist more efficient translations of \vpas into pure VREs that allow for more efficient translations from \vldl into such a logic.
%Moreover, it is unclear whether such a compromise would benefit users of the logic, since VREs extend classical regular expressions with nonclassical operators, whose semantics may not be intuitive to practitioners.

\paragraph{Acknowledgements}
The authors would like to thank Laura Bozzelli for providing the full version of \cite{bozzelli07} and Christof Löding for pointing out the \threeexp-hardness of solving infinite games for visibly pushdown games against \ltl specifications.

\bibliographystyle{splncs03}
\bibliography{main}

\end{document}